\crefname{algorithm}{Algorithm}{Algorithms}
\Crefname{algorithm}{Algorithm}{Algorithms}
\crefname{algocf}{Algorithm}{Algorithms}
\Crefname{algocf}{Algorithm}{Algorithms}
\crefname{figure}{Fig.}{Figs.}
\Crefname{figure}{Fig.}{Figs.}
\DeclareMathOperator{\rowspanop}{rowspan}
\newcommand{\rspan}[1]{\rowspanop\!\left(#1\right)}
\DeclareMathOperator{\im}{im}
\DeclareMathOperator{\Span}{span}
\DeclareMathOperator{\rank}{rank}
\DeclareMathOperator{\supp}{supp}
\DeclareMathOperator{\range}{range}
\DeclareMathOperator{\spark}{spark}
\DeclareMathOperator{\diag}{diag}
\newcommand{\norm}[1]{\left\lVert#1\right\rVert}
\newcommand{\R}{\mathbb{R}}
\newcommand{\Prb}{\mathbb{P}}
\newcommand{\Id}{\mathbb{I}}
\newcommand{\ba}{\bm{a}}
\newcommand{\bx}{\bm{x}}
\newcommand{\by}{\bm{y}}
\newcommand{\be}{\bm{e}}
\newcommand{\eS}{\bm{e}^\star}
\newcommand{\bv}{\bm{v}}
\newcommand{\bz}{\bm{z}}
\newcommand{\bu}{\bm{u}}
\newcommand{\bw}{\bm{w}}
\newcommand{\bP}{\bm{P}}
\newcommand{\bA}{\bm{A}}
\newcommand{\bF}{\bm{F}}
\newcommand{\bU}{\bm{U}}
\newcommand{\bB}{\bm{B}}
\newcommand{\bC}{\bm{C}}
\newcommand{\bZ}{\bm{Z}}
\newcommand{\bQ}{\bm{Q}}
\newcommand{\sG}{\mathscr{G}}
\newcommand{\cZ}{\mathcal{Z}}
\newcommand{\cR}{\mathcal{R}}
\newcommand{\cV}{\mathcal{V}}
\newcommand{\tr}{\top}
\newcommand{\xS}{\bm{x}^\star}
\newcommand{\Output}{\item[\textbf{Output:}] }
\newcommand{\sE}{\mathscr{E}}
\theoremstyle{definition}
\newtheorem{definition}{Definition}
\newtheorem{example}{Example}
\theoremstyle{plain}
\newtheorem{theorem}{Theorem}
\newtheorem{proposition}{Proposition}
\newtheorem{corollary}{Corollary}
\newtheorem{claim}{Claim}
\theoremstyle{remark}
\newtheorem*{remark*}{Remark}
\newtheorem{remark}{Remark}
\begin{document}

\title{\texorpdfstring{%
\makebox[\textwidth][c]{%
\parbox{1.2\textwidth}{\centering
Robustness to Sparse Adversarial Corruption in\\
Arbitrary Linear Measurements: Beyond Exact Recovery}}%
}{Robustness to Sparse Adversarial Corruption in Arbitrary Linear Measurements: Beyond Exact Recovery}}

\author{%
\makebox[\textwidth][c]{%
\begin{minipage}{1.2\textwidth}
\centering
\small Vishal Halder\textsuperscript{\(\dagger\),*}, Alexandre Reiffers-Masson\textsuperscript{\(\dagger\)}, Abdeldjalil A\"issa-El-Bey\textsuperscript{\(\dagger\)}, and Gugan Thoppe\textsuperscript{\(\ddagger\)}\\[0.5em]
\small \textsuperscript{\(\dagger\)}IMT Atlantique, Lab-STICC, CNRS UMR 6285, Brest, France\\
\small \textsuperscript{\(\ddagger\)}Department of Computer Science and Automation, Indian Institute of Science, Bengaluru, India\\[0.5em]
\scriptsize \{vishal.halder, alexandre.reiffers-masson, abdeldjalil.aissaelbey\}\mbox{@}imt-atlantique.fr; gthoppe\mbox{@}iisc.ac.in
\end{minipage}%
}%
}

\date{\today}

\maketitle
\begingroup
\renewcommand{\thefootnote}{*}
\footnotetext[1]{Corresponding author.}
\endgroup

\begin{abstract}

Recovery from linear measurements under sparse adversarial corruption is typically formulated as an exact-recovery problem: one seeks structural conditions on $\bA$ 
(e.g., restricted isometry property) 
guaranteeing unique recovery of $\xS$ from $\by = \bA\xS + \be$ with $\|\be\|_0 \leq q$. 
However, 
these guarantees provide no guidance once exact recovery fails. This limitation obscures simple robustness phenomena---for instance, repeated rows in $\bA$ can preserve nontrivial information about $\xS$ under sparse corruption.

In this paper, we study what information about $\xS$ can be \emph{uniformly} recovered from $\by = \bA\xS + \be$ for arbitrary $\bA\in\mathbb{R}^{m\times n}$ and \emph{any} $q$-sparse $\be$. We show that the robust information is precisely $\xS + \ker(\bU)$, where $\bU$ is the orthogonal projection onto the intersection of rowspaces of all submatrices of $\bA$ obtained by deleting $2q$ rows. This clarifies how the row structure of $\bA$ governs whether a $q$-sparse corruption allows exact, partial, or only trivial recovery. We further prove every $\bx$ minimizing $\|\by - \bA \bx\|_0$ belongs to $\xS + \ker(\bU)$, yielding a constructive approach to recover this set. For i.i.d.\ Gaussian matrices, we establish a sharp phase transition between exact and trivial recovery. We sketch two applications: robust network tomography and signal reconstruction from oversampled DCT.

\end{abstract}

\noindent\textbf{Keywords:}
Sparse adversarial corruption, robust recovery, partial identifiability, linear inverse problems, error correction over reals, robust subspace, robust orthogonal projection

\section{Introduction}
\label{sec:intro}

In many signal-processing systems, a small subset of measurements may be compromised and behave adversarially. 
Examples arise in network monitoring~\cite{amin2009safe}, where compromised routers may falsify traffic data; power systems~\cite{liu2011false,xu2013sparse}, where attacked meters may inject false readings; and distributed sensing~\cite{ren2020secure}, where hijacked sensors may report adversarial measurements. 
A central objective in robust signal processing has therefore been to identify conditions under which the true signal can still be reconstructed from grossly corrupted measurements. 
Such guarantees are often expressed through structural assumptions on the measurement operator, including restricted isometry~\cite{candes2005decoding}, nullspace conditions~\cite{fawzi2014secure}, subset strong convexity/smoothness~\cite{bhatia2015robust}, and spark-type conditions~\cite{lee2018redundant}. 
These conditions provide exact-recovery guarantees when they hold, but they can be restrictive in applications and are widely reported to be NP-hard to verify \cite{tillmann2013computational, bandeira2013certifying, weed2017approximately}. 

More fundamentally, this exact-recovery viewpoint leaves open an important intermediate regime: a measurement system may fail standard recovery conditions and yet still encode certain projections, aggregates, or linear functionals of the signal in a way that is immune to sparse corruption. 
We study this question for the linear model
\begin{equation}\label{eq:linear-model}
    \by = \bA\xS + \be,
\end{equation}
where $\bA\in\mathbb{R}^{m\times n}$ is known, $\xS\in\mathbb{R}^n$ is unknown, and $\be\in\mathbb{R}^m$ is an unknown corruption vector satisfying $\|\be\|_0\le q$.

\begin{example}\label{ex:basic}
   Let 
\[
\bA=\begin{bmatrix}
1 & 1 & 1 & 0 & 0\\
0 & 0 & 0 & 1 & 1\\
1 & 1 & 1 & 1 & 1\\
1 & 1 & 1 & 0 & 0\\
0 & 0 & 0 & 1 & 1
\end{bmatrix}.
\]
Notice that here $\bA$ is not full column rank, so exact recovery of $\xS$ is impossible even in the absence of corruption. Now suppose that any one coordinate of the measurement vector is arbitrarily corrupted, so that $\by = \bA\xS +\be,$ where $\be \in \R^5$ is unknown with $\|\be\|_0\leq q= 1.$ Can we
identify any information about $\xS$ from $\by$ that remains robust despite any corruption $\be$? The answer is yes: the quantities
$s=\frac{1}{3}\sum_{i=1}^3 x_i^\star$ and
$t=\frac{1}{2}\sum_{i=4}^5 x_i^\star$
remain robust. Indeed, since
\begin{equation}
\bA\xS = 
\begin{bmatrix}
    s \quad t \quad s+t \quad s \quad t
\end{bmatrix}^\tr,
\end{equation}
if any one of the coordinates of $\bA\xS$ is corrupted, we can nevertheless determine the true values of $s$ and $t$ by a majority vote. In fact, as we shall see, these quantities constitute all robust information about $\xS$ in this example (see the discussion after \Cref{thm:tight-orth-proj}).
\end{example}

\Cref{ex:basic} demonstrates a critical gap between exact-recovery guarantees and robust partial recovery.
The question addressed in this paper is:

\medskip
\textit{Given an arbitrary measurement matrix $\bA\in\R^{m\times n}$ and an integer $q$, what is the maximum information about an unknown $\xS \in \R^n$ that is guaranteed to be robust in the measurement $\by = \bA \xS + \be$ despite any sparse corruption $\be\in\R^m$ with $\|\be\|_0 \le q$?}
\medskip

In \Cref{thm:tight-orth-proj}, we show that the answer is determined by a canonical subspace of $\R^n$ associated with $\bA$ and $q$. For each $T\subseteq[m]$, let $\bA_T$ denote the submatrix of $\bA$ with rows indexed by $T$, and define
\begin{equation}\label{defn:robust-subspace}
    \cR := \bigcap_{\substack{T\subseteq[m],\\ |T|=m-2q}} \rspan{\bA_T}.
\end{equation}
In \eqref{defn:robust-subspace}, considering submatrices with $2q$ row deletions is natural. Indeed, suppose two signals $\bx_1$ and $\bx_2$ can explain the same measurement $\by$ under $q$-sparse corruptions. Then there exist $\be_1$, $\be_2$ with $\|\be_1\|_0, \|\be_2\|_0 \le q$ such that $\by = \bA\bx_1 + \be_1 = \bA\bx_2 + \be_2$. This implies $\|\bA(\bx_1 - \bx_2)\|_0 \le 2q$, so the signal ambiguity $\bx_1 - \bx_2$ lies in the nullspace of some submatrix of $\bA$ obtained by deleting at most $2q$ rows. This notion is formalized in \Cref{subsec:ambiguity,subsec:the-robust-subspace}.
We call $\cR$ the \emph{robust subspace}. 

Specifically, in \Cref{thm:tight-orth-proj}, we show that the smallest set containing $\xS$ that we can robustly recover is
\begin{equation}\label{eq:intro-smallest-solution-set}
        \xS+\ker(\bU),
\end{equation}
where $\bU$ denotes the orthogonal projector onto $\cR$. 
 Equivalently, the orthogonal projection $\bU\xS$ is a canonical representative of the robust information: it is the minimum-norm element of the set $\xS+\ker(\bU)$. 

Our result \eqref{eq:intro-smallest-solution-set} has a natural form, analogous to familiar solution sets in simpler settings:
in the absence of corruption ($\by = \bA \xS$), the 
solution set
is $\xS + \ker(\bA)$, 
since any $\xS + \bv$ with $\bv \in \ker(\bA)$ produces the same measurements;
if the corruption support $I \subseteq [m]$ with $|I| \leq q$ were known, then the solution set would be $\xS + \ker(\bA_{[m] \setminus I})$, obtained by discarding the corrupted measurements. \eqref{eq:intro-smallest-solution-set} establishes the analogue of these results when the $q$-sparse corruption is arbitrary and unknown.

\subsection{Our Contributions}\label{sec:innovations}

This paper introduces the robust subspace as the object that characterizes the information surviving sparse adversarial corruption. Our main contributions are as follows.

\begin{enumerate}
\item In \Cref{thm:tight-orth-proj}, we prove that the maximal uniformly robust information about $\xS$ is $\bU\xS$, where $\bU$ is the orthogonal projector onto $\cR.$
Equivalently, the smallest robust solution set containing $\xS$ is $\xS+\ker(\bU)$.

\item We show in \Cref{thm:l0} that $\ell_0$-decoding leads to this minimal set, by proving that 
    $\xS + \ker(\bU) = \hat{\bx}_0 + \ker(\bU),$
for any $\hat{\bx}_0 \in \arg \min_x \|\by - \bA\bx\|_0$.

\item We give an exact deterministic algorithm for computing $\bU$ (\Cref{alg:projector}) and prove that computing the robust subspace is NP-hard for arbitrary matrices (\Cref{cor:np-hard}).

\item We identify structured regimes in which the robust subspace is explicit or efficiently computable, including i.i.d.\ Gaussian matrices (\Cref{sec:gauusian-random-matrices}), and oversampled orthonormal-transform measurements (\Cref{subsec:DCT}).

\item \emph{Gaussian Random Matrices}\\
In \Cref{thm:gaussian} we show a sharp phase transition behavior in the robust subspace for matrices with i.i.d. standard normal entries: above the corruption sparsity fraction threshold of $q \geq (m - n)/2$, the robust subspace $\cR$ collapses from $\R^n$ to $\{0\}$. Thus, one either has (theoretical) full recovery or no (nontrivial) recovery---there is no intermediate regime.

\item We sketch two applications of our results in \Cref{sec:numerical}.
\begin{enumerate}[label=(\alph*)]

\item \emph{Robust network tomography}: 
For path--link incidence matrices, $\bU$ identifies the link quantities and link aggregates uniformly recoverable under $q$ corrupted path measurements.

\item \emph{Oversampled DCT atoms}: 
For row-sampled orthonormal transforms, $\cR$ is spanned by atoms appearing with multiplicity greater than $2q$, so $\bU$ is computable in $O(m)$ time.

\end{enumerate}
\end{enumerate}
We also clarify the relation between our work and robust subspace recovery, showing that despite superficial similarities, the two problems address fundamentally different recovery objectives. And finally, we show that the robust function and solution-set ideas extend beyond sparse adversarial corruption by deriving analogous assumption-free recovery results for compressed sensing. 

\section{Formalizing Robustness}
\label{sec:preliminary-robustness}
To precisely characterize what information about an input signal $\xS$ remains robust in a measurement $\by,$ we need a rigorous notion of robustness and robust information. First, note that under adversarial corruption, the same measurement may admit multiple signal--corruption explanations, as illustrated below.
\begin{example}\label{ex:ambiguity}
\[
\underbrace{
\begin{bmatrix}
1 & 0\\
1 & 0\\
1 & 0\\
0 & 1\\
0 & 1
\end{bmatrix}
\begin{bmatrix}
1\\
2
\end{bmatrix}
+
\begin{bmatrix}
0\\
0\\
0\\
-1\\
0
\end{bmatrix}}_{\bA\bx_1+\be_1}
\;=\;
\underbrace{
\begin{bmatrix}
1 & 0\\
1 & 0\\
1 & 0\\
0 & 1\\
0 & 1
\end{bmatrix}
\begin{bmatrix}
1\\
1
\end{bmatrix}
+
\begin{bmatrix}
0\\
0\\
0\\
0\\
1
\end{bmatrix}}_{\bA\bx_2+\be_2}.
\]
Here, the inputs $\bx_1 $ and $\bx_2$ are indistinguishable under the linear measurement model \eqref{eq:linear-model} with $q=1$.
\end{example}
In this section, we first formalize a robust quantity as the value of a function that is invariant across all signal--corruption pairs consistent with the same measurement, and call such functions robust functions. We then characterize the set of all signal vectors in $\R^n$ that can cause such ambiguity by explaining the same measurement (the ambiguity set), and then establish that a necessary and sufficient characterization of robust functions is their invariance over this set.

\subsection{Robust Functions and Solution Sets}
A quantity (output of a function) derived from the input signal is said to be robust only if it has the same value for all signals that are consistent with the same corrupted measurement. 
 \begin{definition}
\label{def:robust}
For $\bA\in\R^{m\times n}$ and integer $q<m/2$, a function $\sG:\R^n\to\cZ$ (arbitrary codomain) is said to be $(\bA, q)$-\emph{robust} if for every $\bx_1, \bx_2 \in \R^n,$ and $q$-sparse $\be_1, \be_2 \in \R^m$,
\begin{equation}
    \bA\bx_1 + \be_1 = \bA\bx_2 + \be_2 \; \implies \;  \sG(\bx_1) = \sG(\bx_2).
\end{equation}
\end{definition}

The information about $\xS$ that is captured by a robust function $\sG$ is characterized by its solution set, defined below.
\begin{definition}\label{def:rob-sol-set}
    Let $\bA \in \R^{m \times n}$ and $q < m/2$ be an integer. Then, for any $(\bA, q)$-robust function $\sG$ and any $\xS \in \R^n$, the \emph{$\sG$-robust solution set containing $\xS$} is  
    \begin{equation}
    X(\sG,\xS) := \{\bx \in \R^n : \sG(\bx) = \sG(\xS)\}.
    \end{equation}
    Clearly, $X(\sG, \xS) = \sG^{-1} (\sG(\xS)),$ the preimage of $\sG(\xS)$ under $\sG$. 
\end{definition}

Any pair of input signals $\bx_1$ and $\bx_2$ that are consistent with the same measurement $\by$, even if the signals themselves differ, would lie in the same robust solution set of any given robust function. That is, signals $\bx_1$ and $\bx_2$ are indistinguishable to any robust function $\sG$, in addition to being indistinguishable to the measurement $\by$. 

\subsection{The Ambiguity Set}\label{subsec:ambiguity}
To formalize the fundamental indistinguishability under the measurement model $\by = \bA\xS + \be$, we introduce the ambiguity set. The ambiguity set consists of all $\bv=\bx_1-\bx_2$  for which there exists $\be_1$ and $\be_2$ with $\|\be_1\|_0,\|\be_2\|_0\leq q$ such that
\begin{equation}
    \bA\bx_1+\be_1=\bA\bx_2+\be_2.
\end{equation} 
\begin{definition}
\label{def:ambiguity-set}
For $\bA\in\R^{m\times n}$ and integer $q<m/2$, define the \emph{ambiguity set} $S_{q}^{\bA} := \{\bv\in\R^n : \|\bA\bv\|_0 \le 2q\}$.
\end{definition}
\begin{remark}\label{rem:ambiguityset-2q-remark}
    The bound $2q$ in the definition of the ambiguity set arises because indistinguishability between two signals $\bx_1$ and $\bx_2$ means that there exist corruptions $\be_1$ and $\be_2$ with $\|\be_1\|_0,\|\be_2\|_0\leq q$ such that $\bA\bx_1+\be_1=\bA\bx_2+\be_2$. Their difference $\be_1 - \be_2$ therefore has cardinality of support at most $2q$, which implies $\|\bA(\bx_1 - \bx_2)\|_0 \le 2q$.
\end{remark}

In the following result, we show that a fundamental characteristic of robust functions is their invariance under addition of elements of the ambiguity set. 

\begin{proposition}
\label{thm:general-recoverability}
Let $\bA\in\R^{m\times n}$, integer $q<m/2$, and $S_{q}^{\bA} := \{\bv\in\R^n:\|\bA\bv\|_0\le 2q\}$. A function $\sG:\R^n\to\cZ$ is $(\bA, q)$-robust iff $\sG(\bx+\bv)=\sG(\bx)$ for all $\bx\in\R^n$ and $\bv\in S_{q}^{\bA}$.
\end{proposition}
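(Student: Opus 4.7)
The plan is to prove both directions by exhibiting the obvious correspondence between a pair $(\be_1,\be_2)$ of $q$-sparse corruptions and a single ambiguity vector $\bv$ whose image $\bA\bv$ has support at most $2q$.

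For the forward implication, I would fix any $\bx\in\R^n$ and any $\bv\in S_q^{\bA}$ and aim to construct $\bx_1,\bx_2,\be_1,\be_2$ so that Definition~\ref{def:robust} applies. The natural choice is $\bx_1:=\bx+\bv$ and $\bx_2:=\bx$, which forces the requirement $\be_2-\be_1=\bA\bv$. Since $\|\bA\bv\|_0\le 2q$, I can partition $\supp(\bA\bv)$ into two disjoint index sets $S_1,S_2$ with $|S_1|,|S_2|\le q$, and define $\be_2$ to equal $\bA\bv$ on $S_1$ (zero elsewhere) and $\be_1$ to equal $-\bA\bv$ on $S_2$ (zero elsewhere). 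Then $\be_1,\be_2$ are $q$-sparse and satisfy $\bA\bx_1+\be_1=\bA\bx_2+\be_2$, so robustness of $\sG$ gives $\sG(\bx+\bv)=\sG(\bx)$.

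For the reverse implication, I would start from any $\bx_1,\bx_2\in\R^n$ and $q$-sparse $\be_1,\be_2$ satisfying $\bA\bx_1+\be_1=\bA\bx_2+\be_2$. Setting $\bv:=\bx_1-\bx_2$ yields $\bA\bv=\be_2-\be_1$, and the triangle-type bound on supports gives $\|\bA\bv\|_0\le\|\be_1\|_0+\|\be_2\|_0\le 2q$, i.e., $\bv\in S_q^{\bA}$. Writing $\bx_1=\bx_2+\bv$ and applying the hypothesis with $\bx=\bx_2$ directly yields $\sG(\bx_1)=\sG(\bx_2)$, which is exactly the robustness condition.

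There is essentially no obstacle here; the only point requiring a moment's thought is the support-splitting construction in the forward direction, where one must verify that allowing $\bv\in S_q^{\bA}$ (which only constrains $\|\bA\bv\|_0\le 2q$) is tight with the $q$-sparsity of each of $\be_1,\be_2$. This is exactly why the definition uses $2q$ rather than $q$: any $2q$-sparse error can be split into two $q$-sparse pieces, and conversely the difference of two $q$-sparse errors is at most $2q$-sparse. I would keep the proof short and state both constructions explicitly.
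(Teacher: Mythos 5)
Your proposal is correct and follows essentially the same route as the paper's proof: the forward direction splits $\supp(\bA\bv)$ into two pieces of size at most $q$ to build the two corruptions, and the reverse direction sets $\bv=\bx_1-\bx_2$ and bounds $\|\bA\bv\|_0=\|\be_2-\be_1\|_0\le 2q$. No gaps.
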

\begin{proof}
($\Rightarrow$) For $\bv\in S_{q}^{\bA}$, let $T=\supp(\bA\bv)$ with $|T|\le 2q$. Partition $T$ as $T = T_1 \cup T_2$. Next, define $\be'$ and $\be$ by $(\be')_j = (\bA\bv)_j$ if $j\in T_1$, 0 otherwise, and $(\be)_j = -(\bA\bv)_j$ if $j\in T_2$, 0 otherwise. Then, for any $\bx \in \R^n$ and $\bv \in S_{q}^{\bA},$ we have $\bA(\bx + \bv)+\be = \bA \bx +\be'$ with $\|\be\|_0,\|\be'\|_0\le q$, so robustness implies $\sG(\bx+\bv)=\sG(\bx)$.  

($\Leftarrow$) Let $\bx$ and $\bx'$ be such that $\bA\bx + \be = \bA\bx'+\be'$ with $\|\be\|_0,\|\be'\|_0\le q.$ Then, $\bA(\bx' - \bx) = \be - \be'$ has support $\le 2q,$ which implies $\bx' - \bx \in S_{q}^{\bA}.$ Therefore, by hypothesis, $\sG(\bx) = \sG(\bx + \bx' - \bx) = \sG(\bx'),$ as desired. 
\end{proof}
Notice that $\xS + S_q^{\bA}$ lies in the robust solution set containing $\xS$ for every robust function $\sG$. 
In \Cref{thm:tight-orth-proj}, we show that the smallest such robust solution set is $\xS + \ker(\bU),$ where $\bU$ is the orthogonal projection matrix onto the robust subspace.

\subsection{The Robust Subspace}
\label{subsec:the-robust-subspace}
For the measurement model $\by = \bA \xS + \be$ with $\|\be\|_0 \leq q,$ the subspace 
\(
    \cR = \bigcap_{\substack{T \subset [m], \\ |T| = m - 2q}} \rspan{\bA_T},
\)
is the robust subspace. The reason $2q$ row deletions arise here is that $\cR$ is orthogonal to every vector in the ambiguity set, i.e.,
\begin{equation}\label{robsupace-ambiguityset-orthogonality}
    \bv \perp \cR, \quad \forall\bv \in S_{q}^{\bA}.
\end{equation}

Indeed, fix $\bv\in S_q^{\bA}$ and let
\(
Q := \supp(\bA\bv),
\)
so that $|Q|\le 2q$. By definition of $Q$, we have $(\bA\bv)_i=0$ for every $i\notin Q$, and hence there are at least $m-|Q|\ge m-2q$ indices on which $\bA\bv$ vanishes. Choose any set $T\subset[m]$ with $|T|=m-2q$ such that $(\bA\bv)_i=0$ for all $i\in T$; equivalently,
\begin{equation}
    \bA_T\bv=0.
\end{equation}

Thus $\bv\in\ker(\bA_T)$. By the fundamental theorem of linear algebra,
\(
\ker(\bA_T)=\rspan{\bA_T}^\perp.
\)
Since $\cR \subseteq \rspan{\bA_T}$ for every such $T$, it follows that $\bv\perp \cR$, i.e., $\bv\in\cR^\perp$.

In our result \Cref{thm:tight-orth-proj}, we improve on \eqref{robsupace-ambiguityset-orthogonality} and show that, in fact, 
\(
    \bv \perp \cR,  \forall\bv \in \Span(S_{q}^{\bA}),
\)
i.e., \eqref{eq:R^perp}. Using this, we obtain the inclusion-wise minimal robust solution set as $\xS + \Span(S^{\bA}_q).$

\section{Main results}
\label{sec:main}

We now state and prove our main results: \Cref{thm:tight-orth-proj,thm:l0}. \Cref{thm:tight-orth-proj} shows that, for any $\bA$ and $q$, there exists a linear function $ \sG(\bx)=\bU\bx$ such that  (1) $\bU$ is an orthogonal projection matrix, (2) $\bU$ is $(\bA,q)$-robust, and (3) $\bU$ yields an inclusion-wise minimal robust solution set: $X(\bU,\xS)\subseteq X(\sG,\xS)$ for every $(\bA,q)$-robust $\sG$ and every $\xS \in \R^n$. Seeking an inclusion-wise minimal robust solution set is equivalent to finding the set that provides the maximum robust information about $\xS$ from a given measurement $\by = \bA\xS + \be$.

\begin{theorem}\label{thm:tight-orth-proj}
Consider a matrix $\bA \in \R^{m\times n}$ and an integer $q < m/2$. Also, let $\cR = \bigcap_{\substack{T \subseteq [m], \\ |T| = m - 2q}}\rspan{\bA_T}$ and $\bU$ the orthogonal projector onto $\cR.$ That is, suppose $\bU$ is the unique symmetric idempotent matrix with $\range(\bU) = \cR$ and $\ker(\bU) = \cR^{\perp}.$ 
Then, the $\bx \rightarrow \bU \bx$ map is $(\bA, q)$-robust. Moreover, for any $(\bA, q)$-robust $\sG:\R^n\to\cZ$ and $\xS\in\R^n$, 
\begin{equation}
    \{\bx\in\R^n : \sG(\bx) = \sG(\xS)\} \supseteq \xS + \ker(\bU),
\end{equation}
with equality for $\sG(\bx) = \bU \bx $.
\end{theorem}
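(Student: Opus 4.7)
The strategy is to combine Proposition~\ref{thm:general-recoverability} with the standard duality $(\bigcap_i \cV_i)^\perp = \sum_i \cV_i^\perp$ applied to the family of rowspaces $\{\rspan{\bA_T} : |T|=m-2q\}$. This yields $\ker(\bU) = \cR^{\perp} = \sum_{T} \ker(\bA_T)$, which is the pivotal identity.

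I would proceed in three steps. First, to show the map $\bx\mapsto \bU\bx$ is $(\bA,q)$-robust, by Proposition~\ref{thm:general-recoverability} it suffices to verify $S_q^{\bA}\subseteq \ker(\bU)$. Given $\bv\in S_q^{\bA}$, the set $T:=[m]\setminus\supp(\bA\bv)$ has size at least $m-2q$; pick any subset $T_0\subseteq T$ with $|T_0|=m-2q$. Then $\bA_{T_0}\bv=\bzero$, i.e.\ $\bv\perp \rspan{\bA_{T_0}}$, and since $\cR\subseteq \rspan{\bA_{T_0}}$ we get $\bv\in\cR^{\perp}=\ker(\bU)$.

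Second, to show the inclusion $\xS+\ker(\bU)\subseteq \{\bx:\sG(\bx)=\sG(\xS)\}$ for any $(\bA,q)$-robust $\sG$, I would use the duality identity above: any $\bw\in\ker(\bU)$ decomposes as a finite sum $\bw=\sum_{i=1}^k \bv_i$ with each $\bv_i\in\ker(\bA_{T_i})$ for some $T_i$ of size $m-2q$. Each such $\bv_i$ satisfies $\|\bA\bv_i\|_0\le 2q$, hence lies in $S_q^{\bA}$. Applying Proposition~\ref{thm:general-recoverability} iteratively, one peels off the summands one at a time:
\[
\sG(\xS+\bw)=\sG\Bigl(\xS+\sum_{i=1}^k \bv_i\Bigr)=\sG\Bigl(\xS+\sum_{i=1}^{k-1}\bv_i\Bigr)=\cdots=\sG(\xS).
\]

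Third, equality when $\sG(\bx)=\bU\bx$ is immediate: $\bU\bx=\bU\xS$ iff $\bU(\bx-\xS)=\bzero$ iff $\bx-\xS\in\ker(\bU)$, combined with the inclusion from step two.

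The main obstacle (really the only nontrivial point) is step two, specifically justifying that $\ker(\bU)$ decomposes as $\sum_T \ker(\bA_T)$ rather than merely containing each $\ker(\bA_T)$. This is where the orthogonal-complement-of-intersection identity does the work, and it is what allows the pointwise invariance guaranteed by Proposition~\ref{thm:general-recoverability} (which only controls directions in $S_q^{\bA}$) to be upgraded to invariance along all of $\ker(\bU)$, even though $\ker(\bU)$ itself need not be contained in $S_q^{\bA}$.
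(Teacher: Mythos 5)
Your proposal is correct and follows essentially the same route as the paper: both rest on the duality $(\bigcap_T \rspan{\bA_T})^\perp = \sum_T \ker(\bA_T)$ to identify $\ker(\bU)$ with the span of $S_q^{\bA}$, and both upgrade the pointwise invariance from Proposition~\ref{thm:general-recoverability} to all of $\ker(\bU)$ by peeling off summands one at a time. The only (cosmetic) difference is that the paper packages the two inclusions as the single identity $\ker(\bU)=\Span(S_q^{\bA})$, whereas you verify $S_q^{\bA}\subseteq\ker(\bU)$ and the sum decomposition separately.
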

\begin{proof}
By the fundamental theorem of linear algebra \cite{strang1993fundamental}, $\rspan{\bA_T}=(\ker(\bA_T))^\perp$, and since $(\cap \cV_i)^\perp=\sum \cV_i^\perp$ (see \cite[Ex.~5, Section12]{Halmos1958}), we obtain  $
\cR^\perp=\sum_{\substack{T \subseteq [m], \\ |T| = m - 2q}}\ker(\bA_T),$ 
where the sum of vector spaces is the Minkowski sum. 
For $\bv\in S_q^{\bA}$ with $Q=\supp(\bA\bv)$ and  $|Q|\le 2q$, we have $\bv\in\ker(\bA_{Q^c})$, so 
\begin{equation}
S_q^{\bA}=\bigcup_{\substack{Q \subseteq [m], \\ |Q| \le 2q}} \ker(\bA_{Q^c}), \quad 
\Span(S_q^{\bA})=\sum_{\substack{Q \subseteq [m], \\ |Q| \le 2q}}  \ker(\bA_{Q^c})
\end{equation}
(since the span of a union of subspaces is their sum, see \cite[Thm.~3, Section11]{Halmos1958}).  
Writing $T=Q^c$ implies $|T|\ge m-2q$, and using $\ker(\bA_{T_2})\subseteq\ker(\bA_{T_1})$ for $T_1\subseteq T_2$, it follows that 
\begin{equation}
    \label{eq:R^perp}\Span(S_q^{\bA})=\sum_{\substack{T \subseteq [m], \\ |T| = m - 2q}}\ker(\bA_T)=\cR^\perp.
\end{equation}
Thus, \begin{equation}\label{eq:ker=span}
    \ker(\bU)=\Span(S_q^{\bA}),
\end{equation}
which implies that $\bU\bv=0$ whenever $\bv\in S_q^{\bA}.$ \Cref{thm:general-recoverability} now shows that the map $\sG^\star(\bx) = \bU \bx$ is $(\bA, q)$-robust.  
To prove the final statement, 
let $\xS \in \R^n$ and $\sG$ be $(\bA, q)$-robust. 
Any $\bw \in \Span(S_q^{\bA})$ can be written as $\bw = \sum_{j=1}^r c_j \bv_j$ with $\bv_j \in S_q^{\bA}$ and $c_j \in \R$.  
Since $c_j \bv_j \in S_q^{\bA}$, we apply \Cref{thm:general-recoverability} iteratively: first $\sG(\xS + c_1 \bv_1) = \sG(\xS)$, then $\sG(\xS + c_1 \bv_1 + c_2 \bv_2) = \sG(\xS + c_1 \bv_1) = \sG(\xS)$, and so on until $\sG(\xS + \sum_{j=1}^r c_j \bv_j) = \sG(\xS)$.  
Hence, $\xS + \Span(S_q^{\bA}) \subseteq \{\bx \in \R^n : \sG(\bx) = \sG(\xS)\}$.
\end{proof}

This theorem shows that the robustness of a matrix $\bA$ against unknown $q$-sparse adversaries is characterized exactly by the deletion of $2q$ rows of $\bA$. In particular, the theorem identifies the subspace  
\begin{equation}
    \cR = \bigcap_{\substack{T \subseteq [m], \\ |T| = m - 2q}} \rspan{\bA_T}
\end{equation}
as the object governing robustness. We refer to this intersection as the \emph{robust subspace}, and to its orthogonal projector $\bU$ as the \emph{robust orthogonal projection matrix}. The affine subspace $\xS + \ker(\bU)$ being the smallest robust solution set leads to the fact that $\bU\xS$ is the uniquely determined maximal robust partial information about $\xS$ with minimum $\ell_2$ norm.

Continuing the discussion on \Cref{ex:basic}, observe that the subspace spanned by rows of the observation matrix $\bA$, $\Span(\bu, \bv),$ with $\bu=\begin{bmatrix}1\, 1\, 1\, 0\, 0\end{bmatrix}^\tr$ and $ \bv=\begin{bmatrix}0\,0\,0\,1\,1\end{bmatrix}^\tr,$ remains invariant to any $2q=2$ row deletions. Further, notice that the projection of $\xS$ onto $\Span(\bu, \bv)$ gives $(s,t)^\tr,$ with $s=\frac{\bu^\tr \xS}{\|\bu\|^2}$ and $t = \frac{\bv^\tr \xS}{\|\bv\|^2}.$ Therefore, based on \Cref{thm:tight-orth-proj}, we conclude that $\bU\xS = (s,t)^\tr$, with
\[
s=\frac{\bu^\tr \xS}{\|\bu\|^2}=\frac{1}{3}\sum_{i=1}^3 x_i^\star, \quad
t=\frac{\bv^\tr \xS}{\|\bv\|^2}=\frac{1}{2}\sum_{i=4}^5 x_i^\star,
\]
is the most one can hope to recover robustly.

\begin{remark} Our results in \Cref{thm:tight-orth-proj} are universal in the sense that they fully characterize what can be deterministically recovered without any assumptions on 
$\bA$. They also generalize prior results on exact signal recovery under structured measurement matrices. Indeed, \Cref{thm:tight-orth-proj} recovers the necessary and sufficient exact-recovery condition stated in \cite[Proposition 2]{fawzi2011secure} as a special case when $\bA$ satisfies conditions equivalent to the robust orthogonal projection being identity, i.e., $\bU = \Id_n$.
\end{remark}
\begin{corollary}\label{cor:L0-fawzi}
    Consider a matrix $\bA \in \R^{m\times n}$ and an integer $q < m/2$. Also, let $\cR = \bigcap_{\substack{T \subseteq [m], \\ |T| = m - 2q}}\rspan{\bA_T}$ and $\bU$ the orthogonal projector onto $\cR.$ Then, 
    \begin{equation}
    \ker(\bU) = \{0\} \, \Leftrightarrow \, |\supp(\bA\bz)| > 2q \; \forall \bz \in \R^n\setminus\{0\}.
    \end{equation}
\end{corollary}
\begin{proof}
From \Cref{thm:tight-orth-proj}, $\ker(\bU) = \Span(S_q^{\bA})$. Thus, $ \ker(\bU) = \{0\}\, \Leftrightarrow \, \Span(S_q^{\bA}) = \{0\}\, \Leftrightarrow \, S_q^{\bA} = \{0\}$. But $S_q^{\bA} = \{0\}$ iff there is no nonzero $\bz$ with $\|\bA\bz\|_0 \le 2q$, i.e., iff $|\supp(\bA\bz)| > 2q$ for all $\bz \in \R^n \setminus \{0\}$.
\end{proof}

Given \Cref{thm:tight-orth-proj}, which establishes that the inclusion-wise minimal robust solution set that is robust to $q$ adversarial corruptions is $\xS + \ker(\bU)$, the central question now becomes how to extract this set. Clearly, this set is equivalent to $\{\bx \in \R^n: \bU \bx = \bU \xS\}.$ Our next result shows how we can extract this set, knowing $\bU$ and any $\ell_0$ minimizer. 

\begin{theorem}\label{thm:l0}
Let $\by = \bA \xS + \be$ with $\|\be\|_0 \le q$, and let $\bU$ be the orthogonal projection matrix onto $\bigcap_{\substack{T \subseteq [m], \\ |T| = m - 2q}}\rspan{\bA_T}$.
Then, every $\hat\bx \in \arg\min_{\bx\in \R^n} \|\by - \bA \bx\|_0$ satisfies
$   \hat\bx  + \ker(\bU) = \xS + \ker(\bU).$
\end{theorem}
\begin{proof}
Let $\hat\be=\by-\bA\hat\bx$. Since $\hat\bx$ minimizes $\|\by-\bA\bx\|_0$, we have $\|\hat\be\|_0 \le \|\be\|_0 \le q$.  
Thus $\supp(\hat\be-\be)\subseteq\supp(\hat\be)\cup\supp(\be)$, giving $\|\hat\be-\be\|_0 \le 2q$.  
But $\hat\be-\be=(\by-\bA\hat\bx)-(\by-\bA\xS)=\bA(\xS-\hat\bx)$, so $\xS-\hat\bx\in S_q^{\bA}$.  
By \eqref{eq:ker=span}, $S_q^{\bA}\subseteq\ker(\bU)$, hence $\bU\hat\bx=\bU\xS$. Now clearly, $\hat\bx - \xS \in \ker(\bU),$ i.e., $\hat\bx  + \ker(\bU) = \xS + \ker(\bU)$.
\end{proof}

\section{Computation and Hardness}

In \Cref{thm:l0,} we see that the robust orthogonal projection matrix $\bU$ and $\ell_0$-decoding together yield a constructive procedure to recover all robust information about $\xS$. This section is devoted to the problem of computing $\bU$, and equivalently, the robust subspace $\cR.$

\subsection{Computing the Robust Orthogonal Projection}
\label{sec:alg}

Here, we introduce a computation scheme (\Cref{alg:projector}) that takes as input any arbitrary measurement matrix $\bA$ and an upper bound $q$ on the sparsity of the adversarial corruption $\be$, and returns the robust orthogonal projection matrix.

\begin{algorithm}[t]
\caption{Computing the robust orthogonal projector}
\label{alg:projector}
\begin{algorithmic}[1]
\Require $\bA\in\R^{m\times n}$, integer $q<m/2$
\Output Orthogonal projector $\bU$ onto the robust subspace $\bigcap_{\substack{T \subseteq [m], \\ |T| = m - 2q}} \rspan{\bA_T}$
\State $\mathscr T \gets \{T\subseteq[m]:|T|=m-2q\}$, $\bC\gets 0_{n\times n}$
\For{$T\in\mathscr T$}
    \State $\bA_T \gets$ submatrix of $\bA$ with rows in $T$
    \State $\bB_T \gets$ orthonormal basis of $\ker(\bA_T)$ via $\mathrm{svd}(\bA_T)$
    \State $\bC \gets \bC + \bB_T\bB_T^\top$
\EndFor
\State $[\bQ,\bm{\Lambda}]\gets\mathrm{eig}(\bC)$ 
\State $\bZ \gets$ columns of $\bQ$ with eigenvalue $\lambda=0$
\State \Return $\bU \gets \bZ\bZ^\top$ 
\end{algorithmic}
\end{algorithm}

\begin{claim}
\Cref{alg:projector} outputs the orthogonal projector onto the robust subspace
$\cR = \bigcap_{\substack{T \subseteq [m], \\ |T| = m - 2q}}\rspan{\bA_T}.$
\end{claim}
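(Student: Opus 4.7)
The plan is to verify that the algorithm correctly implements the definition of $\bU$ from Theorem~\ref{thm:tight-orth-proj} by tracking how each step relates to the robust subspace $\cR = \bigcap_{T} \rspan{\bA_T}$. The key object to analyze is the running sum $\bC = \sum_T \bB_T \bB_T^\top$, and I would argue in three stages: first identify each summand as a projector, then compute $\ker(\bC)$, and finally show that the final returned matrix is exactly the orthogonal projector onto that kernel's complement--or rather onto $\cR$ itself.

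\medskip

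\noindent\textbf{Step 1: identify each summand.} Since $\bB_T$ is constructed to have orthonormal columns spanning $\ker(\bA_T)$, the matrix $\bP_T := \bB_T \bB_T^\top$ is the orthogonal projector onto $\ker(\bA_T)$. In particular, $\bP_T$ is symmetric and positive semidefinite, with $\im(\bP_T) = \ker(\bA_T)$ and $\ker(\bP_T) = \ker(\bA_T)^\perp = \rspan{\bA_T}$ by the fundamental theorem of linear algebra.

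\medskip

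\noindent\textbf{Step 2: compute $\ker(\bC)$.} I would show $\ker(\bC) = \bigcap_T \ker(\bP_T) = \cR$. The nontrivial inclusion is $\ker(\bC) \subseteq \bigcap_T \ker(\bP_T)$: if $\bC \bv = 0$, then $\bv^\top \bC \bv = \sum_T \bv^\top \bP_T \bv = \sum_T \|\bP_T \bv\|^2 = 0$, where I used symmetry and idempotence of $\bP_T$. Since each term is nonnegative, each $\bP_T \bv = 0$, so $\bv \in \ker(\bP_T) = \rspan{\bA_T}$ for every $T$. The reverse inclusion is immediate, and putting the two together yields $\ker(\bC) = \cR$.

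\medskip

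\noindent\textbf{Step 3: conclude.} The matrix $\bC$ is symmetric PSD, so its eigendecomposition $\bC = \bQ \bm{\Lambda} \bQ^\top$ has an orthonormal eigenbasis and nonnegative eigenvalues, with the zero-eigenvalue eigenspace equal to $\ker(\bC) = \cR$. Hence the columns of $\bZ$ form an orthonormal basis for $\cR$, and $\bU = \bZ \bZ^\top$ is the unique orthogonal projector onto $\cR$. This is precisely the projector characterized in Theorem~\ref{thm:tight-orth-proj}, so the algorithm is correct.

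\medskip

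The only step requiring genuine care is the identification $\ker(\bC) = \bigcap_T \ker(\bP_T)$; this is where I would emphasize the PSD-sum-kills-each-term argument via the quadratic form, since a careless reader might expect instead that the sum of projectors equals a projector only under orthogonality hypotheses. Everything else is bookkeeping about orthogonal projectors and eigendecompositions of symmetric matrices.
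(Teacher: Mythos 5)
Your proposal is correct and follows the same overall structure as the paper's proof: both analyze $\bC=\sum_T \bB_T\bB_T^\top$, identify its zero eigenspace with $\cR$, and conclude that $\bZ\bZ^\top$ is the desired projector. The one place where you diverge is the key identity. The paper computes $\im(\bC)=\sum_T\im(\bP_T)=\sum_T\ker(\bA_T)$, citing the fact that ``the image of a sum equals the sum of images,'' and then passes to orthogonal complements via $(\sum_i \cV_i)^\perp=\cap_i \cV_i^\perp$. You instead compute $\ker(\bC)$ directly through the quadratic form: $\bC\bv=0$ forces $\sum_T\|\bP_T\bv\|^2=0$, hence $\bP_T\bv=0$ for every $T$, giving $\ker(\bC)=\cap_T\ker(\bP_T)=\cap_T\rspan{\bA_T}$. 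These are dual versions of the same fact, but your route is the more careful one: the image-of-a-sum identity the paper invokes is false for general matrices (take $\bA$ and $-\bA$) and holds here only because each $\bP_T$ is positive semidefinite --- and the standard proof of that PSD case is exactly your quadratic-form argument. So your proof is self-contained where the paper's leans on a citation-level assertion, at the cost of a few extra lines; your closing remark correctly flags this as the only step requiring genuine care.
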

\begin{proof}
Let $\cR = \bigcap_{\substack{T \subseteq [m], \\ |T| = m - 2q}}\rspan{\bA_T}.$ For each $T\subseteq [m]$ with $|T| = m - 2q$, let $\bP_T := \bB_T \bB_T^\top$ be the orthogonal projector onto $\ker(\bA_T)$, so $\im(\bP_T) = \ker(\bA_T)$. Then $\im(\bC) = \im(\sum_T \bP_T) = \sum_T \im(\bP_T) = \sum_T \ker(\bA_T)$, since in finite dimensions the image of a sum equals the sum of images. Hence $\ker(\bC) = (\im(\bC))^\perp = (\sum_T \ker(\bA_T))^\perp$. By the orthogonal-complement identity $(\sum_i V_i)^\perp = \cap_i V_i^\perp$ \cite[Ex.~5, Section12]{Halmos1958}, and by the fundamental theorem of linear algebra $\rspan{\bA_T} = (\ker(\bA_T))^\perp$ 
\cite{strang1993fundamental}
, we have $\ker(\bC) = \cap_T (\ker(\bA_T))^\perp = \cap_T \rspan{\bA_T} = \cR$. Therefore the zero-eigenvectors of $\bC$ span $\cR$, and $\bU = \bZ \bZ^\top$ is the orthogonal projector onto $\cR$.
\end{proof}

While not polynomial-time, \Cref{alg:projector} is conceptually valuable as an exact deterministic procedure to compute the robust orthogonal projector for any $(\bA, q)$, achieving the set inclusion-wise minimality guarantee established in \Cref{thm:tight-orth-proj}. Although it is a combinatorial scheme, in the next section, we prove that we can do no better for the case of arbitrary matrices, since computing the robust subspace in this general case is NP-Hard.

\subsection{Hardness}
\label{sec:hardness}

Here, we show that the problem of computing the robust subspace $\cR$ (and the associated robust orthogonal projection matrix $\bU$) for any given arbitrary matrix $\bA \in \R^{m \times n}$ and integer $q<m/2$ is NP-hard in general. 
Note that although NP-hardness holds true for arbitrary $\bA,$ we can still have fast algorithms to compute the robust subspace for structured observation matrices. For instance, we shall see in \Cref{subsec:DCT} that this computation is possible in $O(m)$ time for an $\bA$ with orthogonal rows, whereas for the canonical class of i.i.d. Gaussian random matrices, we see in \Cref{sec:gauusian-random-matrices} that the robust subspace can be determined in $O(1)$ time.

We prove NP-hardness through the decision question of whether the robust subspace is rank deficient for any \(\bA\in\mathbb{Q}^{m\times n}\) and \(q\), i.e.,
\begin{equation}\label{decision:RSRD}
\text{Is } \dim\Bigl(\bigcap_{\substack{T\subseteq[m]\\ |T|=m-2q}}\rspan{\bA_T}\Bigr)<n\ ?
\end{equation}
\eqref{decision:RSRD} is equivalent to $\mathrm{MinULR}_0^=$, the NP-complete problem of deciding whether, given \(\bA\in\mathbb{Q}^{m\times n}\) and \(k\), there exists \(\bv\neq 0\) such that \(\norm{\bA\bv}_0\le k\)~\cite{tillmann2019computing}. Indeed, setting \(k=2q\) and using \eqref{eq:R^perp},
$\dim(\cR)<n
\iff \cR^\perp\neq\{0\} 
\iff \Span(S_q^{\bA})\neq\{0\} 
\iff \exists\,\bv\neq0:\ \norm{\bA\bv}_0\le 2q .$

Thus deciding \eqref{decision:RSRD} is NP-hard. And, a polynomial-time algorithm for computing \(\cR\) for arbitrary \(\bA\) would yield a polynomial-time algorithm for \eqref{decision:RSRD}.
Hence, the following result is immediate.
\begin{theorem}\label{cor:np-hard}
    For arbitrary $\bA\in \R^{m \times n}$ and integer $q<m/2$, computing the linear subspace $\cR=\bigcap_{\substack{T \subseteq [m], \\ |T| = m - 2q}}\rspan{\bA_T}$ is NP-hard.
\end{theorem}

\section{Gaussian Random Matrices}\label{sec:gauusian-random-matrices}
In this section, we argue that with probability $1$, the robust subspace $\cR$ for Gaussian measurement matrices with independent and identically distributed entries shows a sharp phase transition phenomenon.
The key insights from our result \Cref{thm:gaussian} are as follows.

If the sparsity $q$ of the corruption vector $\be$ obeys $q\leq(m-n)/2$, then each row submatrix $\bA_T$ with $|T|=m-2q\geq n$ is a tall random matrix with i.i.d. normal entries. Thus, their rowspaces are expressive enough to be the full space $\R^n,$ leading to the robust subspace $\cR = \R^n.$

Whereas in the case where $q>(m-n)/2$, the wide submatrices $\bA_T$ with $|T|=m-2q<n$ are "disjoint enough" that only the $0$ vector remains common in their row spaces, showing collapse of the robust subspace to $\{0\}.$

Moreover, \Cref{thm:gaussian} exemplifies that the computation of the robust subspace for structured matrices is possible in polynomial time. For a Gaussian random matrix, knowing only its size and the corruption sparsity $q$, we have the robust subspace in $O(1)$ time. Another such example is given in \Cref{subsec:DCT}.

\begin{theorem}\label{thm:gaussian}
    Let $\bA \in \R^{m \times n} = (a_{ij})_{1\leq i \leq m, 1 \leq j \leq n}$ be a random matrix with i.i.d. standard normal entries, i.e.,  $a_{ij} \sim \mathcal{N}(0,1)$. Fix any integer $1 \leq q < m/2.$ Define the robust subspace as in \Cref{thm:tight-orth-proj}:
\(
\mathcal R=\allowbreak\bigcap_{\substack{T\subset[m]\\|T|=m-2q}}\allowbreak\rspan{A_T},
\)
and let $\bU$ be the orthogonal projector onto $\cR.$
Then, almost surely, the following hold true.
\begin{equation}\label{thm:gaussian-statement-1}
    \mathcal{R} = \R^n \quad \text{if and only if} \quad q \le (m-n)/2,
\end{equation}
\begin{equation}\label{thm:gaussian-statement-2}
    \mathcal{R} = \{0\} \quad \text{if and only if} \quad q > (m-n)/2.
\end{equation}
Equivalently, $\bU = \Id_n$ precisely when $q \le (m-n)/2$, and $\bU = 0$ precisely when $q > (m-n)/2$.
\end{theorem}
\begin{proof}
We prove the biconditionals by establishing only the "if" (sufficient) directions:
\begin{enumerate}
    \item \label{proof:gaussian-statement-1}
    If \(q\le (m-n)/2\), then
\(
\Prb\big(\cR=\R^n \big)=1,
\)
hence \(\bU=\Id_n\) almost surely.
    \item \label{proof:gaussian-statement-2}
    If \(q> (m-n)/2\), then
\(
\Prb\big( \cR=\{0\} \big)=1,
\)
hence \(\bU=0\) almost surely.
\end{enumerate}
Then "only if" (necessary) directions then follow. Indeed, let $\cR = \R^n,$ and suppose (for contradiction) that \(q> (m-n)/2\). But then by Statement \ref{proof:gaussian-statement-2} above, we would have $\cR = \{0\},$ leading to a contradiction. Similarly, let $\cR = \{0\},$ and suppose (for contradiction) that \(q \leq (m-n)/2\). Then, by Statement \ref{proof:gaussian-statement-1} above, we would have $\cR = \R^n,$ leading to a contradiction.

We now proceed to prove Statements \ref{proof:gaussian-statement-1} and \ref{proof:gaussian-statement-2}. \\
Denote \(k := m-2q\).

\begin{enumerate}
\item \label{proof:random-part-1} Suppose \(q \le (m-n)/2\), so \(k \ge n\).

Fix \(T \subseteq [m]\) with \(|T| = k\). Choose any \(T' \subseteq T\) with \(|T'| = n\), so \(
\rspan{\bA_T} \supseteq \rspan{\bA_{T'}}\). Now, the submatrix \(\bA_{T'} \in \R^{n \times n}\) is a square matrix with i.i.d. \(\mathcal{N}(0,1)\) entries. 

We proceed by using the following standard result (see, for instance, \cite[Section 1]{rudelson2008invertibility}, \cite[Eq.~(7.1)]{rudelson2009smallest}, \cite[Section 2.3]{vershynin2014invertibility}, \cite[Proof of Cor. 1.8]{candes2005decoding}, \cite[Proof of Thm. 8.2]{rudelson2013recent}): square random matrices with absolutely continuous entries are full rank almost surely.
Therefore, $\Prb(\det(\bA_{T'}) = 0) = 0,$ and hence 
\(
 \rspan{A_{T'}} = \R^n 
\) a.s.
Since  \(\rspan{\bA_T} \supseteq \rspan{\bA_{T'}}\), \(\rspan{\bA_T} = \R^n\) a.s.\ for every such fixed \(T\). Therefore,
\begin{align}
\, \Pr(\cR \ne \R^n) 
&= \Pr\bigl(\exists\, T\subset[m],\, |T|=k : \rspan{\bA_T} \ne \R^n\bigr) \\
&\le \sum_{\substack{T\subset[m]\\|T|=k}} \Pr(\rspan{\bA_T} \ne \R^n) = 0,
\end{align}
so \(\Prb(\cR = \R^n) = 1\).

\item Suppose \(q > (m-n)/2\), so \(k < n\). 

The proof strategy here is to show that there exists a row $\ba_r^\tr$ of $\bA,$ such that any vector from the robust subspace $\bv \in \cR$ can be written as a scalar multiple $\alpha \ba_r$. Then, linear independence of any $k+1\leq n$ rows of $\bA$  (proved in Part \ref{proof:random-part-1}), along with the fact that $\bv \in \rspan{\bA_T}$ for any $|T|=k$ (by definition of $\cR$), establishes that $\alpha=0,$ and thus $\bv = 0$. This is elucidated below.

Define a $k+1$ row index set as $I:=\{1, \dots, k+1\} \subset [m]$ without loss of generality.
Fix \(\bv \in \cR\). 
Remember that \(\bv \in \rspan{\bA_{I\setminus \{j\}}} = \Span\left(\{\ba_i\}_{i \in I\setminus \{j\}}\right)\) for each \(j \in \{1, \dots, k+1\}\), by definition of $\cR$ (since $|I\setminus \{j\}|=k)$.
We now show that \(\bv = \alpha \ba_{k+1}\) for some scalar \(\alpha \in \R\), by setting $l=k+1$ in the following identity:
\begin{equation}\label{eq:random-part-2-identity}
    \bv \in \Span\left(\{\ba_i\}_{l\leq i\leq k+1}\right)\quad \forall l \in \{1, \dots, k+1\}.
\end{equation}
We prove \eqref{eq:random-part-2-identity} by induction.
The case \(l=1\) is immediate, since for any $T\subset I$ with $|T| = k$, \(\rspan{\bA_T} \subseteq \Span\left(\{\ba_i\}_{1\leq i\leq k+1}\right)\), and so $\bv \in \cR \subseteq \Span\left(\{\ba_i\}_{1\leq i\leq k+1}\right).$
Assume \eqref{eq:random-part-2-identity} holds for some \(l < k\), i.e., there exist scalars $\alpha_i$ such that 
\begin{equation} \label{eq:rand-proof-part2-1}
    \bv = \sum_{i=l}^{k+1} \alpha_i \ba_i.
\end{equation} 
Since 
    $\bv \in \rspan{\bA_{I\setminus \{(l)\}}} = \Span\left(\ba_1, \dots, \ba_{l-1}, \ba_{l+1}, \dots, \ba_{k+1}\right),$
there exist scalars \(\beta_i\) such that
\begin{equation} \label{eq:rand-proof-part2-2}
    \bv = \sum_{i=1}^{l-1} \beta_i \ba_i + \sum_{i=l+1}^{k+1} \beta_i \ba_i.
\end{equation}
Equating \eqref{eq:rand-proof-part2-1} and \eqref{eq:rand-proof-part2-2}, and using the almost sure linear independence of \(\{\ba_1, \dots, \ba_{k+1}\}\) from Part \ref{proof:random-part-1}, we have \(\beta_i = 0\) for \(i \le l-1\), \(\alpha_{l} = 0\), and \(\alpha_i = \beta_i\) for \(i \ge l+1\). Thus, $\bv = \sum_{i=l+1}^{k+1} \alpha_i \ba_i \in \Span\left(\{\ba_i\}_{l+1\leq i\leq k}\right),$
completing the induction.
Next, since \(\bv \in \Span\left(\ba_1, \dots, \ba_k\right)\) by definition of $\cR$, the almost sure linear independence of \(\{\ba_1, \dots, \ba_{k+1}\}\) (from Part \ref{proof:random-part-1}) forces \(\alpha = 0\), i.e., \(\bv = 0\). Hence, \(\Prb(\cR = \{0\}) = 1\).
\end{enumerate}
\end{proof}

\begin{remark}\label{remark:finite-dim-result}
\Cref{thm:gaussian} is a finite-dimensional result. It holds simultaneously for every finite integer pair $(m,n)$.
Once \(m\) and \(n\) are fixed, the entire argument in the proof is non-asymptotic. Any infinite-dimensional analogue or high-dimensional asymptotic analysis would require different tools, e.g., concentration of measure in the large-\(n\) limit \cite{ledoux2001concentration}.
\end{remark}

\begin{remark}
The first assertion, equation \eqref{thm:gaussian-statement-1}, of \Cref{thm:gaussian} is essentially a restatement of Corollary 1.8 in \cite{candes2005decoding} in terms of the concept of the robust subspace introduced in this paper. Indeed, from \Cref{thm:tight-orth-proj}, we know that (theoretical) full recovery of \emph{any} \(\xS\in\R^n\) (i.e., unique recovery in the presence of any \(q\)-sparse corruption) is possible if and only if \(\ker(\bU)=\{0\}\), or equivalently \(\bU=\Id_n\). 
For i.i.d.\ Gaussian \(\bA\), \Cref{thm:gaussian-statement-1} says that this holds almost surely precisely when 
\(q\le(m-n)/2\), recovering the information-theoretic threshold 
\(\rho=q/m\le(1-n/m)/2\) stated in \cite[Corollary 1.8]{candes2005decoding}.
\end{remark}

\begin{figure}[H]
\centering

\subfloat[]{
\begin{minipage}{0.49\textwidth}
  \centering
  \includegraphics[width=\linewidth]{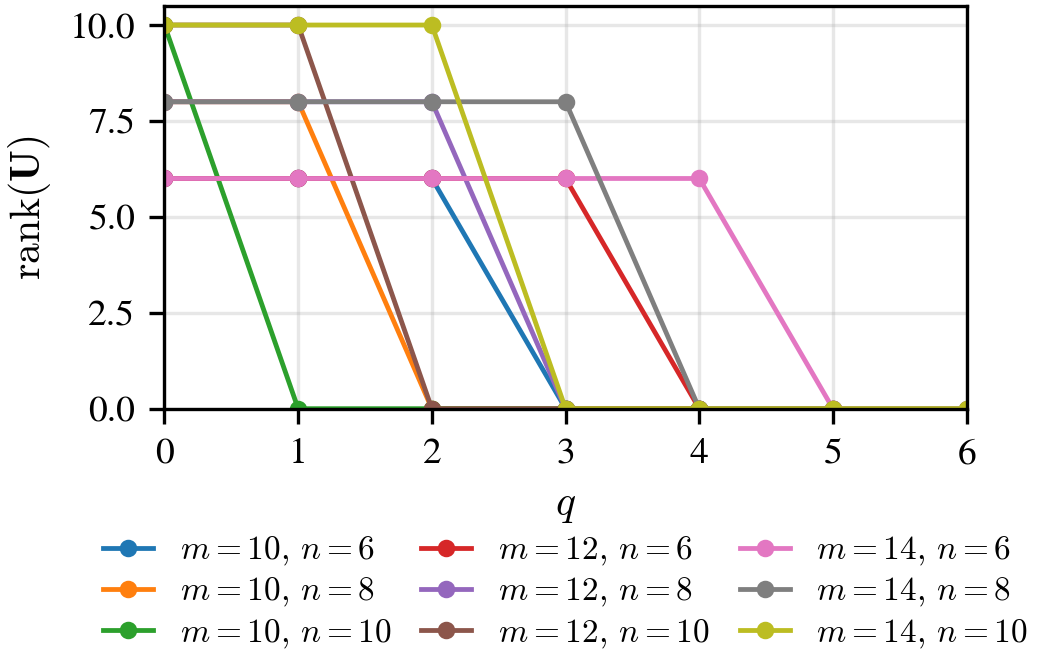}
  \label{fig:rank-vs-q}
\end{minipage}
}
\subfloat[]{
\begin{minipage}{0.48\textwidth}
  \centering
  \includegraphics[width=\linewidth]{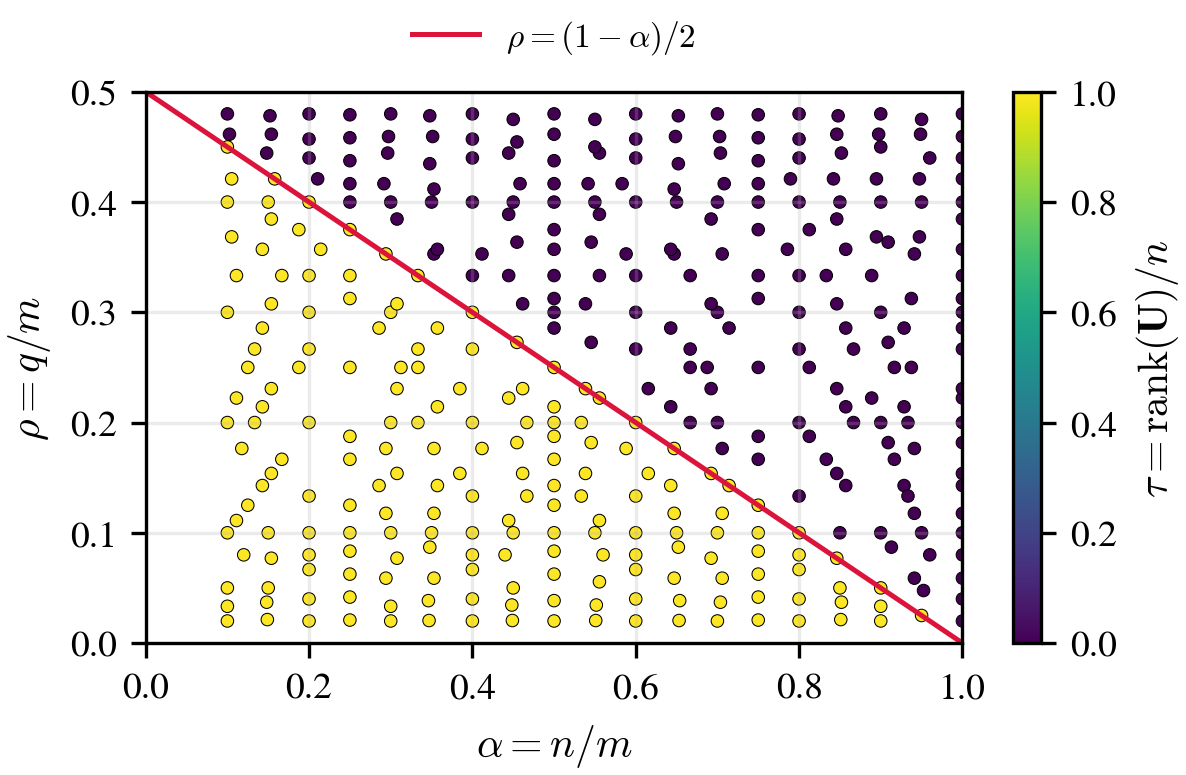}
  \label{fig:alpha-rho-phase}
\end{minipage}
}

\caption{Gaussian experiments. 
(a) $\rank(\bU)$ as a function of $q$ for nine Gaussian matrices with different $(m,n)$. 
(b) Normalized robust rank ($\rank(\bU)/n$) over corruption fraction $\rho=q/m$ and width:height $\alpha=n/m$.}
\label{fig:gaussian}
\end{figure}

We illustrate \Cref{thm:gaussian} empirically by conducting two experiments (\Cref{fig:gaussian}).

\textit{First experiment.} In \Cref{fig:rank-vs-q}, we generate $9$ different realizations of Gaussian matrices with i.i.d. standard normal entries with distinct $(m,n)$, and plot the rank of the robust orthogonal projection matrices computed using \Cref{alg:projector} for different values of $q$ for each $(m,n)$ pair. The collapse of the robust subspace from $\R^n$ to $\{0\}$ is evident, and it obeys the sharp phase transition of $(m-n)/2$.

\textit{Second experiment.} In \Cref{fig:alpha-rho-phase}, we visualize the sharp phase transition by plotting the signal-space normalized rank, $\rank(\bU)/n$ of the robust subspace for different values of the corruption fraction $\rho:=q/m$ and the width:height ratio of the matrices $\alpha:=n/m$. We generate realizations by sweeping through $m\in [3,50], n\in [1,50],$ and compute the robust subspace by varying $q\in [1,24]$ while obeying $q<m/2$, and using \Cref{alg:projector}. In \Cref{fig:alpha-rho-phase}, we report $\rank(U)/n$ for 389 $(\rho,\alpha)$ points. \Cref{fig:alpha-rho-phase} validates \Cref{thm:gaussian} empirically, and supports \Cref{remark:finite-dim-result}.

\section{Stylized Applications}
\label{sec:numerical}
We sketch a few application scenarios where our results may be of interest. 

\subsection{Robust Network Tomography} 
Consider now a problem in network tomography (cf. \cite{ma2013efficient}), where link level measurements (e.g., delay or packet loss) are to be inferred from path measurements. Consider the network shown in \Cref{fig:nt-sub-a}. The colored curves in the figure represent distinct observed measurement paths, where each path measurement is the sum of the link measurements on the path. Suppose that $q$ path measurements may be adversarially corrupted. The network manager would like to determine which link measurements are robust, and whether there exist linear combinations of links that are robust even when individual links are not. 
Our theory is particularly well suited to answer such questions. We first construct the standard $0$-$1$ path-link matrix for this network, then run \Cref{alg:projector} on it with $q$ as a parameter to obtain the robust orthogonal projection matrix $\bU$.

For $q=1$, the rank of $\bU$ is $3$, and a visualization of its basis vectors on the network is shown in \Cref{fig:nt-sub-c}. We find that the link measurements $x_3$ and $x_5$ are individually robust, and separately, that only the sum $x_4 + x_6 +x_7$ is robust. Under adversarial corruption, there is no recovery method that can recover the exact measurement on the remaining links.
Note that our theory can detect robust linear combinations of links even when they are disconnected (see basis vector $\bU_{. \, 4}$ in \Cref{fig:nt-sub-c}, corresponding to $x_4 + x_6 +x_7$).

Furthermore, the diagonal elements $\bU_{jj}$ of the robust projection matrix can be defined as \emph{robustness scores}, with $1$ meaning link $j$ is robust, $0$ meaning it is neither robust nor part of any robust linear combination, and a score in $(0,1)$ meaning link $j$ participates in a robust linear combination (e.g. $\bU_{. \, 4}$ in \Cref{fig:nt-sub-c}). For this network, the robustness scores are visualized in \Cref{fig:nt-sub-d} for the case $q=1.$ For $q>1$, the robust subspace collapses to the empty set and no link measurements are robust.

\begin{figure}[H]
\centering

\subfloat[]{%
\begin{minipage}[t]{0.42\textwidth}
  \vspace{0pt}
  \centering
  \includegraphics[width=\linewidth,height=0.7\textheight,keepaspectratio]{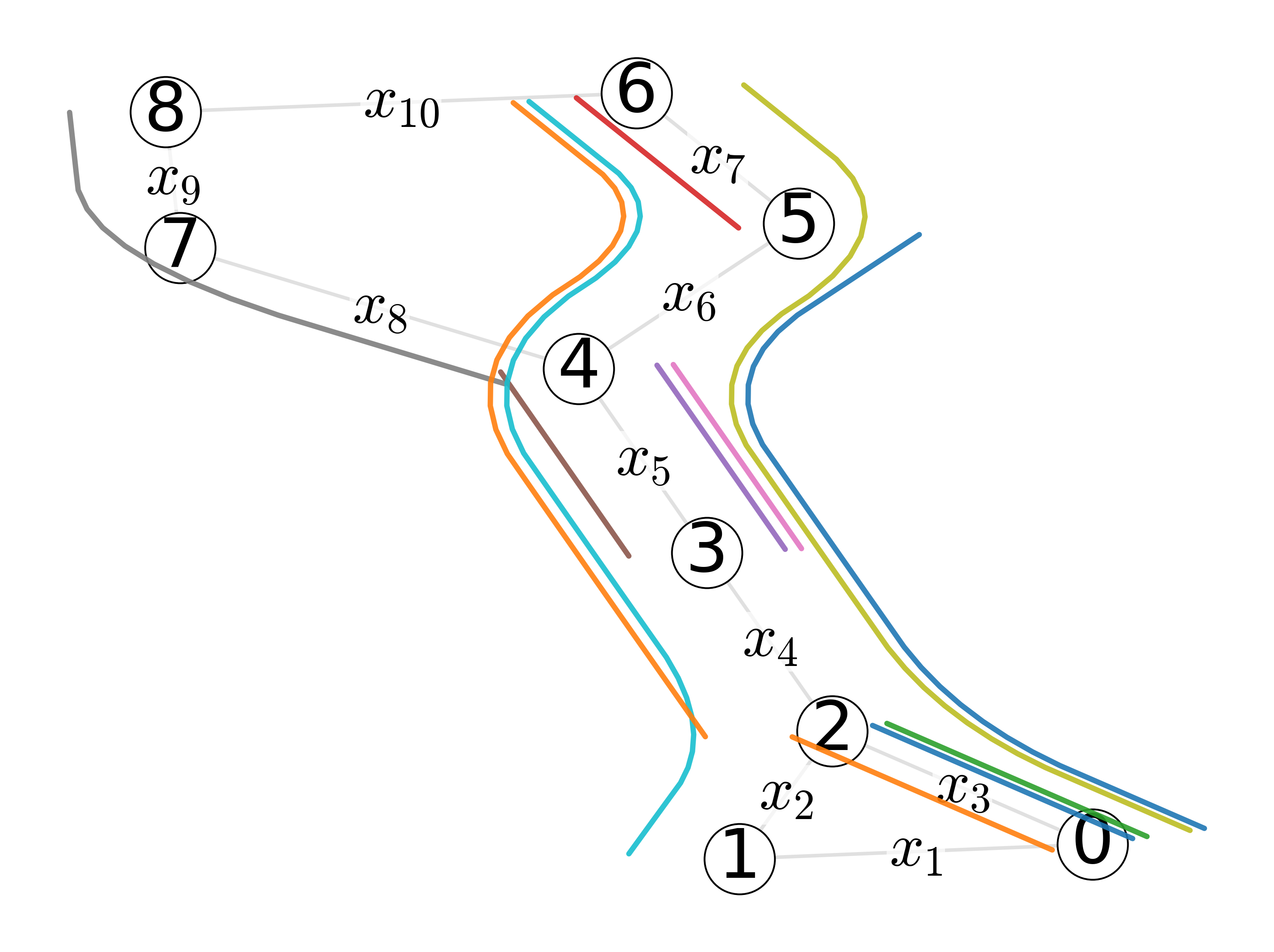}
  \label{fig:nt-sub-a}
\end{minipage}
}
\hfill
\subfloat[]{%
\begin{minipage}[t]{0.55\textwidth}
  \vspace{25pt}
  \centering
  \begingroup
  \setlength{\arraycolsep}{0.9pt}
  \resizebox{\linewidth}{!}{%
  \(
  \bU\bx
  \,=\,
    \begin{bmatrix}
      0 & 0 & 0 & 0 & 0 & 0 & 0 & 0 & 0 & 0 \\
      0 & 0 & 0 & 0 & 0 & 0 & 0 & 0 & 0 & 0 \\
      0 & 0 & 1 & 0 & 0 & 0 & 0 & 0 & 0 & 0 \\
      0 & 0 & 0 & \tfrac13 & 0 & \tfrac13 & \tfrac13 & 0 & 0 & 0 \\
      0 & 0 & 0 & 0 & 1 & 0 & 0 & 0 & 0 & 0 \\
      0 & 0 & 0 & \tfrac13 & 0 & \tfrac13 & \tfrac13 & 0 & 0 & 0 \\
      0 & 0 & 0 & \tfrac13 & 0 & \tfrac13 & \tfrac13 & 0 & 0 & 0 \\
      0 & 0 & 0 & 0 & 0 & 0 & 0 & 0 & 0 & 0 \\
      0 & 0 & 0 & 0 & 0 & 0 & 0 & 0 & 0 & 0 \\
      0 & 0 & 0 & 0 & 0 & 0 & 0 & 0 & 0 & 0
    \end{bmatrix}
    \begin{bmatrix}
      x_{1}\\ x_{2}\\ x_{3}\\ x_{4}\\ x_{5}\\ x_{6}\\ x_{7}\\ x_{8}\\ x_{9}\\ x_{10}
    \end{bmatrix}
    \,=\,
    \begin{bmatrix}
      0\\
      0\\
      x_{3}\\
      \tfrac13(x_{4}+x_{6}+x_{7})\\
      x_{5}\\
      \tfrac13(x_{4}+x_{6}+x_{7})\\
      \tfrac13(x_{4}+x_{6}+x_{7})\\
      0\\
      0\\
      0
    \end{bmatrix}
  \)
  }
  \endgroup
  \label{fig:nt-sub-b}
\end{minipage}
}

\par\medskip

\subfloat[]{%
\begin{minipage}[t]{\textwidth}
  \centering
  \includegraphics[width=\linewidth,height=0.2\textheight,keepaspectratio]{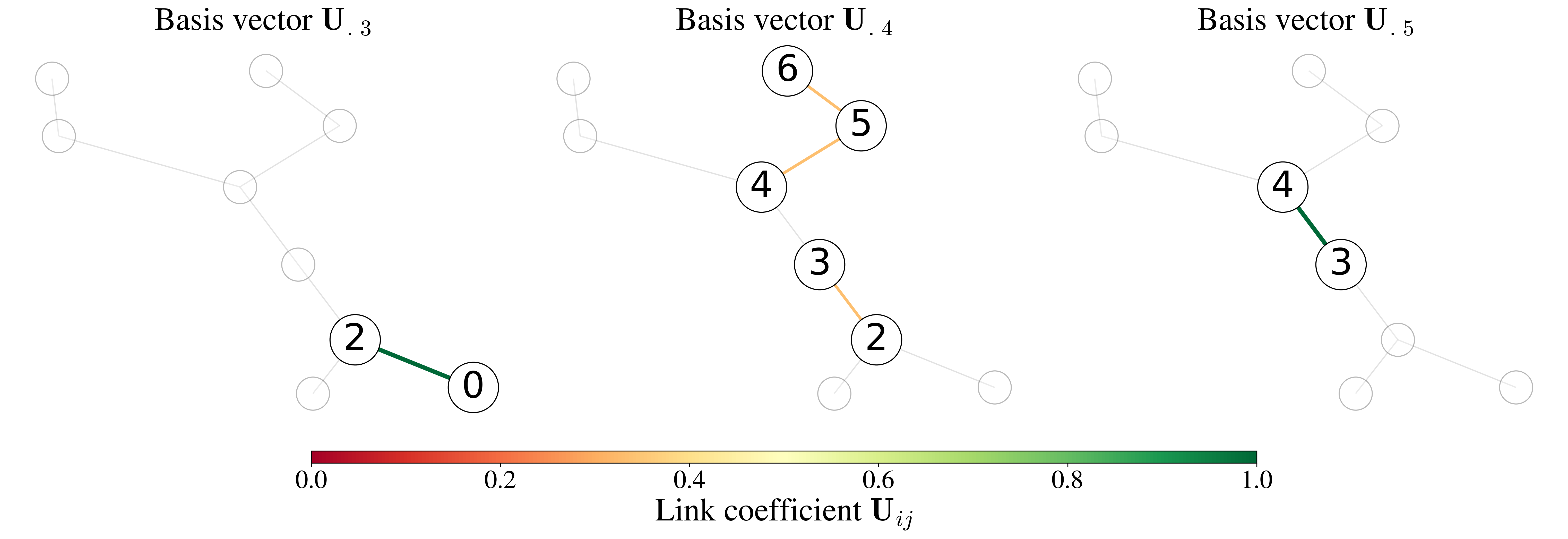}
  \label{fig:nt-sub-c}
\end{minipage}
}

\par\medskip

\subfloat[]{%
\begin{minipage}[t]{\textwidth}
  \centering
  \includegraphics[width=\linewidth,height=0.2\textheight,keepaspectratio]{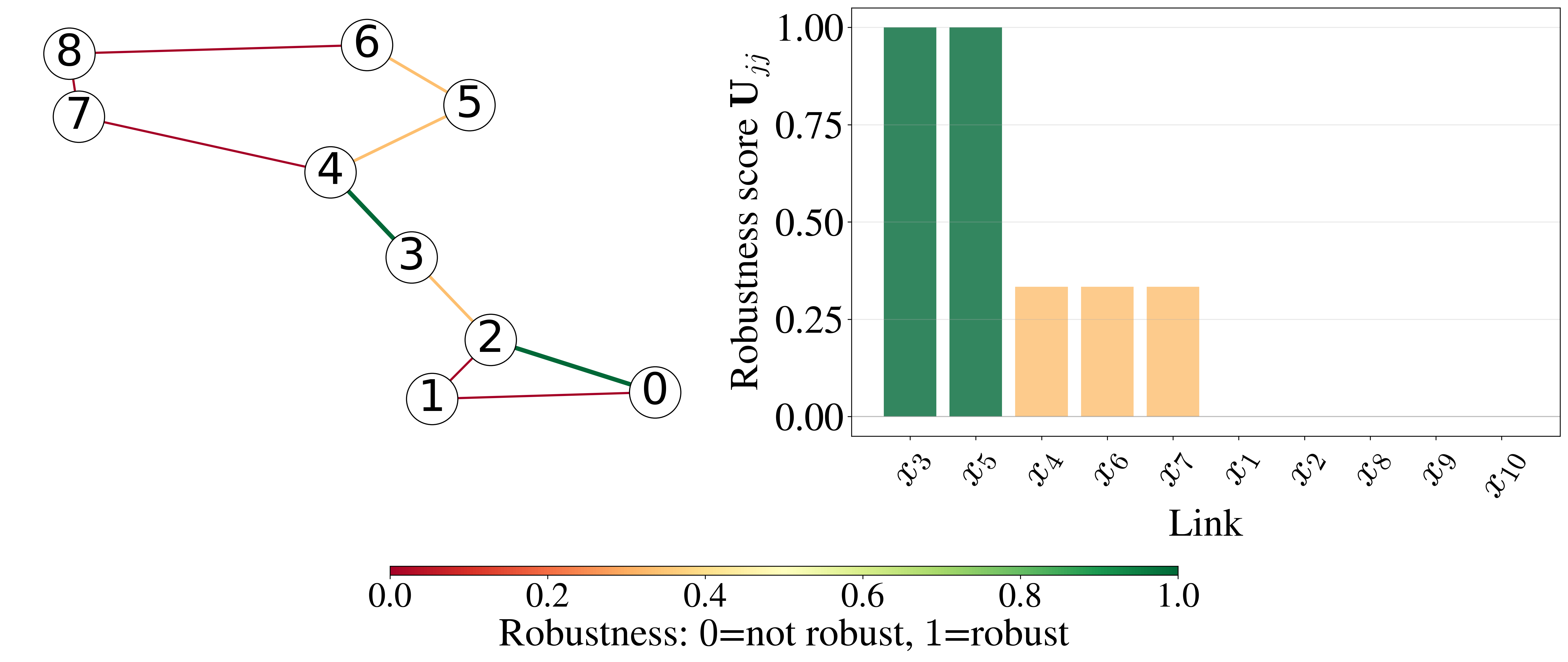}
  \label{fig:nt-sub-d}
\end{minipage}
}

\caption{Robust network tomography for $q=1$ corrupted path measurement.
(a) The network and observed paths.
(b) The robust projection matrix $\bU$ acting on a link-measurement vector $\bx$.
(c) Basis vectors of the robust subspace.
(d) Per-link robustness scores induced by $\diag(\bU)$.}
\label{fig:nt-2x2}
\end{figure}

\subsection{Oversampled DCT Measurements}
\label{subsec:DCT}
\begin{figure}[H]
\centering
\captionsetup[subfloat]{captionskip=-4pt}

\subfloat[]{%
\begin{minipage}[t]{0.47\textwidth}
  \centering
  \includegraphics[width=\linewidth,height=0.23\textheight,keepaspectratio]{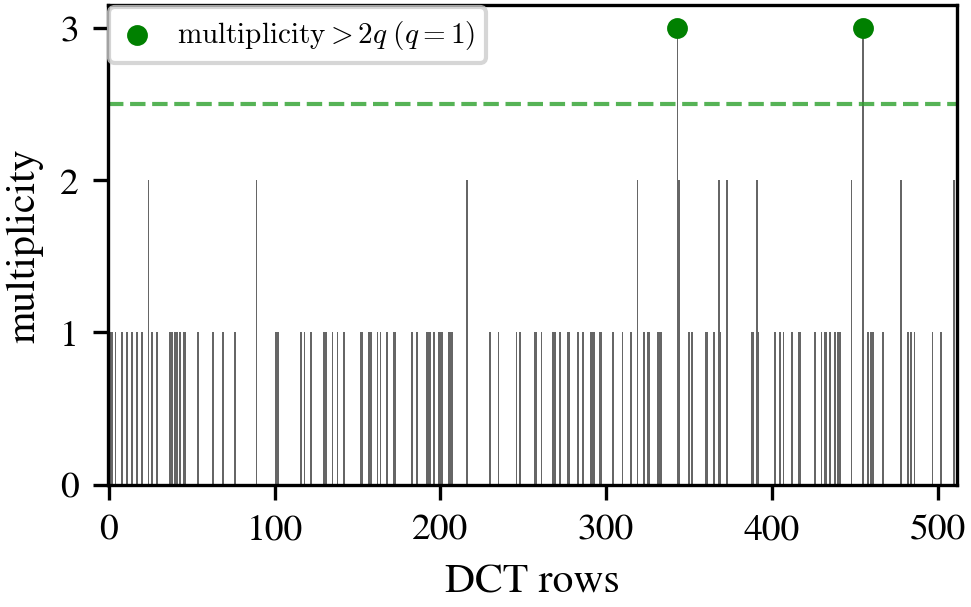}
  \label{fig:dct-sub-a}
\end{minipage}
}
\hfill
\subfloat[]{%
\begin{minipage}[t]{0.47\textwidth}
  \centering
  \includegraphics[width=\linewidth,height=0.23\textheight,keepaspectratio]{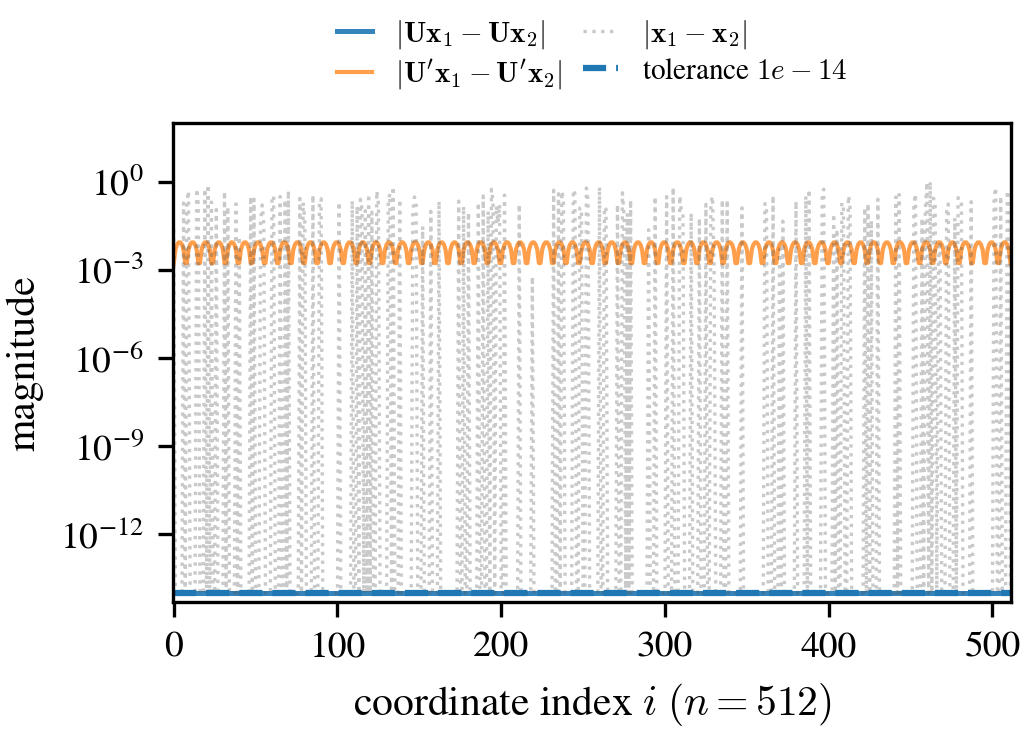}
  \label{fig:dct-sub-b}
\end{minipage}
}

\par\vspace{0.25em}

\subfloat[]{%
\begin{minipage}[t]{0.98\textwidth}
  \centering
  \includegraphics[width=\linewidth,height=0.21\textheight,keepaspectratio]{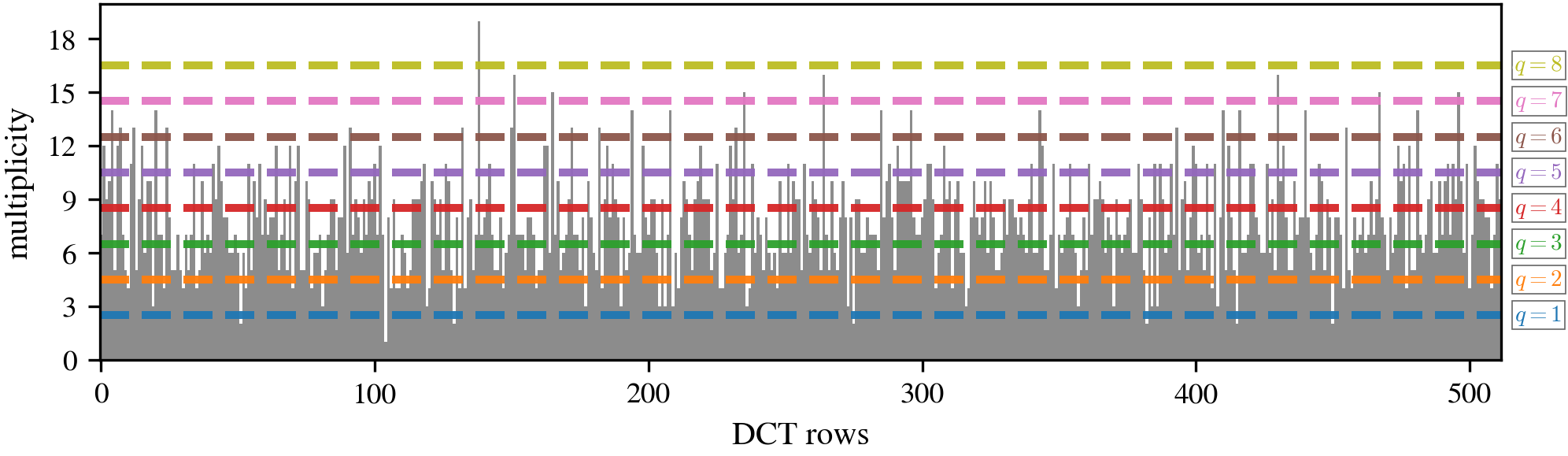}
  \label{fig:dct-sub-c}
\end{minipage}
}

\caption{Oversampled DCT simulations. 
(a) Multiplicities of sampled DCT rows for $\bA\in\R^{140\times512}$ with $q=1$. 
(b) Comparison of the correct $2q$ multiplicity threshold with an incorrect $q$ threshold. 
(c) Number of robust DCT atoms for $\bA\in\R^{3600\times512}$ as $q$ varies.}
\label{fig:dct-2plus1}
\end{figure}
Consider a stylized transform-measurement model, motivated by DCT-based estimation problems~\cite{cruz2016dct}. 
Let $\bA\in\R^{m\times n}$ be formed by sampling rows from an $n\times n$ orthonormal DCT matrix independently and uniformly with replacement. 
Given $\by=\bA\xS+\be$ with $\|\be\|_0\le q$, we ask which DCT components of $\xS$ remain robust.

In this setting, the robust subspace has a simple counting characterization: a DCT atom belongs to $\cR$ iff its row appears in $\bA$ with multiplicity strictly larger than $2q$. 
Atoms with multiplicity at most $2q$ can be removed by a $2q$-row deletion, while atoms with multiplicity greater than $2q$ survive every such deletion. 
Thus $\cR$ is computable in $O(m)$ time by counting row multiplicities.

\Cref{fig:dct-2plus1} illustrates this characterization through two simulations. 
First, we sample $140$ rows from the $512\times512$ DCT matrix, giving $\bA\in\R^{140\times512}$. 
For $q=1$, \Cref{fig:dct-sub-a} shows that only two DCT atoms remain robust, and no atoms remain robust for $q>1$. 
\Cref{fig:dct-sub-b} illustrates the necessity of the $2q$ threshold: there exist distinct signals $\bx_1,\bx_2$ and $q$-sparse corruptions $\be_1,\be_2$ such that
\[
    \bA\bx_1+\be_1=\bA\bx_2+\be_2,
\]
with $\bU\bx_1=\bU\bx_2$ for the correct $2q$-threshold projector, but $\bU'\bx_1\neq\bU'\bx_2$ for the incorrect $q$-threshold projector.

Second, we sample $3600$ rows from the same DCT matrix, giving $\bA\in\R^{3600\times512}$. 
\Cref{fig:dct-sub-c} shows the number of robust DCT atoms as $q$ varies. 
Even in this oversampled setting, uniform sampling is fragile against adversarial corruption: only one atom remains robust at $q=8$, and the robust subspace collapses to zero at $q=9$.

\section{Relation to other recovery problems} \label{sec:connections}
\subsection{Distinction from Robust Subspace Recovery}
The phrase "robust subspace'' suggests a possible connection with robust subspace recovery (RSR)~\cite{hardt2013algorithms}, but the two problems have different objectives. In RSR, one is given a corrupted data matrix and seeks an unknown inlier subspace containing most of its rows. In this work, the measurement matrix $\bA$ is known and uncorrupted; the corruption occurs in the measurement vector $\by=\bA\xS+\be$. Our goal is therefore not to recover an inlier subspace of $\bA$, but to identify the subspace of signal information that is invariant under every $q$-sparse measurement corruption.

A tempting but incorrect analogy is to apply RSR directly to the rows of $\bA$: with outlier fraction $\alpha=q/m$, one might seek the smallest subspace containing at least $m-q$ rows of $\bA$. The following result shows that this RSR-type object does not coincide with our robust subspace $\cR$.

\begin{theorem}\label{thm:rsc-vs-rsr-counterexample}
There exist integers $m,n$, matrix $\bA \in \R^{m\times n}$, and integer $q < m/2$ such that the subspace $\cR = \bigcap_{\substack{T\subseteq[m]\\ |T|=m-2q}} \rspan{\bA_T}$ does not coincide with any solution of the following optimization problem:
\begin{equation}\label{eq:RSR-type}
    \min_{\cV\subseteq\R^n}\ \dim(\cV)
\text{ s.t. }
\bigl|\{i\in[m]:\ba_i\in\cV\}\bigr| \ge m-q,
\end{equation}
where $\ba_i^\tr$ are the rows of $\bA.$
\end{theorem}

\begin{proof}
Let $q=1$, and consider the matrix
\(
\bA= 
\begin{bmatrix}
1 & 2 & 0 & 0 & 0 \\
0 & 0 & 1 & 2 & 0 \\
0 & 0 & 0 & 0 & 1
\end{bmatrix}^{\!\top}.
\)
Here,
$\cR = \{0\}$ 
while
$\cV_1 := \Span\{(1,0,0),(0,1,0)\}$ is a feasible solution of \eqref{eq:RSR-type}. Indeed, \(\dim(\cV_1)=2\) and
the first four rows of $\bA$ lie in $\cV_1$, so
\(|\{i : \ba_i \in \cV_1\}|=4 \geq 5-1\).
Moreover, no $1$-dimensional subspace can be feasible since no set of $4$ rows of $\bA$ are colinear.
Any $\cV$ with $\dim(\cV)=1$ satisfies
\(|\{i : a_i \in \cV\}| \le 2 < 4\) and is infeasible.
Now, since $\cR \neq \cV_1$, this provides the desired counterexample.
\end{proof}

\subsection{Assumption-free compressed sensing}
Here we discuss how our solution set viewpoint can be used to derive similar assumption-free tight recoverability results in compressed sensing (or sparse recovery). This illustrates the utility of the research methodology developed in this work to obtain results in different recovery problems. Just as in sparse error correction, compressed sensing has also been dominated by the exact-recovery framework, which seeks sufficient conditions for the exact recovery of a sparse signal from its linear measurements \cite{candes2006near, donoho2006compressed}.

A well known necessary and sufficient condition (\cite[Corollary 1]{donoho2003optimally}, which we state as \Cref{thm:spark} below) for uniqueness of a sparse solution $\eS$ of a linear system $\bz = \bF \eS$ uses the notion of the \emph{spark} of $\bF$. It is defined as \(
\spark(\bF) := \min_{\bv \in \ker(\bF)\setminus \{0\}} \|\bv\|_0,\) and it gives the smallest number of linearly dependent columns of the matrix $\bF \in \R^{l\times m}$ (see \cite[Definition 1]{donoho2003optimally}).
    
\begin{theorem}[Corollary 1 in  \cite{donoho2003optimally}]\label{thm:spark}
    For $\bF \in \R^{l \times m}$ and integer $q\geq0,$ any $q$-sparse $\be \in \R^m$ (i.e., $\|\be\|_0 \leq q)$ is the unique $q$-sparse representation of $\bz = \bF \be$ if and only if $q<\spark(\bF)/2.$
\end{theorem}

In our result \Cref{thm:sparse-result}, we generalize \Cref{thm:spark} and delineate the best sparse recovery result for any arbitrary matrix $\bF.$ As in \Cref{sec:main}, here optimality is in terms of inclusion-wise minimal solution sets.
We prove that the maximum information about any $q$-sparse vector $\eS \in \R^m$ recoverable from linear measurements $\bF\eS$ is precisely the set $(\eS + \ker(\bF))  \cap \Sigma_q,$ where $\Sigma_q=\{\bv\in\R^m:\|\bv\|_0\le q\}$ denotes the set of $q$-sparse vectors in $\R^m$. 
 
The steps are the following, as before: (i) In \Cref{def:consistency}, we first characterize functions that remain consistent under consistency of measurements generated by $q$-sparse signals; then, (ii) in \Cref{def:cons-sol-set}, we characterize the sparsity-obeying solution sets generated by such consistent functions; and finally, (iii) in \Cref{thm:sparse-result}, we identify the smallest solution set---and therefore the set giving the maximum information---containing the $q$-sparse signal $\eS$ that can be identified from linear measurements $\bz = \bF\eS.$

\begin{definition}\label{def:consistency}
Let $\Sigma_q=\{\bv\in\R^m:\|\bv\|_0\le q\}$ denote the set of $q$-sparse vectors in $\R^m$. For $\bF\in\R^{l\times m}$ and integer $q<m/2$, a function $\sE:\R^m\to\cZ$ (arbitrary codomain) is said to be \emph{$(\bF,q)$-consistent} if for every $\be_1,\be_2\in\Sigma_q$,
\begin{equation}\label{eq:consistency-criterion}
    \bF\be_1=\bF\be_2 \implies \sE(\be_1)=\sE(\be_2).
\end{equation}
\end{definition}
Note that in \Cref{def:consistency}, the functions $\sE$ need to obey the consistency criterion \eqref{eq:consistency-criterion} over only the set of sparse vectors $\Sigma_q.$ They may take arbitrary values elsewhere.

\begin{definition}\label{def:cons-sol-set}
Let $\bF\in\R^{l\times m}$ and integer $q<m/2$. Also, let $\Sigma_q=\{\bv\in\R^m:\|\bv\|_0\le q\}$. Then, for any $(\bF,q)$-consistent function $\sE$ and any $\be^\star\in\Sigma_q$, the \emph{$(\bF,q)$-consistent solution set containing $\eS$} is
\begin{equation}
E(\sE, \eS):=\{\be\in\Sigma_q:\sE(\be)=\sE(\eS)\}.    
\end{equation}
\end{definition}

In the following theorem, we prove that there exists no $(\bF, q)$-consistent function $\sE$ that is more informative than $\bF$. That is, $\bF$ generates the smallest $(\bF,q)$-consistent solution set. 

\begin{theorem}\label{thm:sparse-result}
Consider a matrix $\bF \in \R^{l \times m}$ and integer $q<m/2.$ Also, let $\Sigma_q=\{\bv\in\R^m:\|\bv\|_0\le q\}$. Then, the $\bv \to \bF\bv$ map is $(\bF,q)$-consistent by definition. Moreover, for any $(\bF,q)$-consistent $\sE:\R^m\to\cZ$ and $\be^\star\in\Sigma_q$, 
\begin{equation}
    \{\be\in\Sigma_q:\sE(\be)=\sE(\eS)\} \supseteq  (\be^\star + \ker(\bF)) \cap \Sigma_q,
\end{equation}
with equality for $\sE(\bv)=\bF\bv.$
\end{theorem}
\begin{proof}
Firstly, note that \(E(\bF, \be^\star) = \{\be \in \Sigma_q : \bF \be = \bF \be^\star\} = (\be^\star + \ker(\bF)) \cap \Sigma_q\).
We need to show that for any ($\bF,q$)-consistent \(\sE\), 
\begin{equation}
E(\sE, \be^\star) \supseteq E(\bF, \be^\star).    
\end{equation}
Indeed, this is immediate from \Cref{def:consistency}, since for any
$\be \in E(\bF,\be^\star)$, we have $\bF\be = \bF\be^\star$, and thus,
by consistency of $\sE$, we have $\sE(\be) = \sE(\be^\star)$. 
Hence $\be \in E(\sE,\be^\star)$.
\end{proof}
With \Cref{thm:sparse-result}, we know what is the information theoretic limit of sparse signal recovery when the measurements are obtained using any arbitrary signal encoding matrix $\bF.$ For any sparse signal $\eS,$ the set $(\be^\star + \ker(\bF)) \cap \Sigma_q$ arises naturally as an inverse problem solution set. \Cref{thm:sparse-result} establishes that no smaller set can be recovered. Indeed, when considering the case of full recovery of $\eS$, \Cref{cor:sparse-spark} below shows that whenever the spark criterion in \Cref{thm:spark} (Corollary 1 in  \cite{donoho2003optimally}) holds, \Cref{thm:sparse-result} gives the smallest solution set as $\{\eS\}$, thereby verifying its tightness.

\begin{corollary}\label{cor:sparse-spark}
Let \(\bF \in \R^{l \times m}\) and integer \(q < m/2\).
Then,
\begin{equation}
q < \spark(\bF)/2 
\;\Leftrightarrow\; (\eS + \ker(\bF)) \cap \Sigma_q = \{\eS\}
\quad \forall \eS \in \Sigma_q,
\end{equation}
where \(\spark(\bF) = \min_{\bv \in \ker(\bF)\setminus \{0\}} \|\bv\|_0\).
\end{corollary}

\begin{proof}
($\Rightarrow$) Assume \(q < \spark(\bF)/2\). Let \(\eS \in \Sigma_q\) and suppose \(\be \in (\eS + \ker(\bF)) \cap \Sigma_q\) with \(\be \neq \eS\). Then \(\bv := \be - \eS \in \ker(\bF) \setminus \{0\}\), so \(\|\bv\|_0 \ge \spark(\bF)\). But \(\|\bv\|_0 \le \|\be\|_0 + \|\eS\|_0 \le 2q\), implying \(\spark(\bF) \le 2q\), a contradiction. Thus, \((\eS + \ker(\bF)) \cap \Sigma_q = \{\eS\}\).

($\Leftarrow$) Assume \((\eS + \ker(\bF)) \cap \Sigma_q = \{\eS\}\) for all \(\eS \in \Sigma_q\). Suppose, for contradiction, \(q \ge \spark(\bF)/2\). Then, there exists a \(\bv \in \ker(\bF) \setminus \{0\}\) with \(s := \|\bv\|_0 \le 2q\). Let \(S := \supp(\bv)\). Choose \(T \subseteq S\) with \(|T| = t := \lceil s/2 \rceil \le q\), so \(|S \setminus T| = \lfloor s/2 \rfloor \le q\). We will now construct a pair $\be,\eS\in \Sigma_q$ supported on $S \setminus T $ and $T$ respectively, such that $\be \in (\eS + \ker(\bF)) \cap \Sigma_q$, while clearly $\be \neq \eS.$

Define \(\eS \in \R^m\) by \(e^\star_j = -v_j\) if \(j \in T\), else 0. Then \(\|\eS\|_0 = t \le q\), so \(\eS \in \Sigma_q\). Now, define \(\be := \eS + \bv\). Clearly, \(\supp(\be) = S \setminus T\), so \(\be  \in \Sigma_q\), while \(\be \neq \eS\). Finally, since $\bv \in \ker(\bF),$ \(\be \in (\eS + \ker(\bF)) \cap \Sigma_q\), a contradiction. Thus, \(q < \spark(\bF)/2\).
\end{proof}

\section{Conclusion}
In this paper, we go beyond the conventional exact-recovery framework, and determine what inherent robust information about an input vector $\xS$ can be identified from $\by = \bA \xS + \be,$ where $\be$ with $\|\be\|_0\leq q$ may be any sparse adversarial corruption. Specifically, we show that the robust information is given by the affine subspace $\xS + \ker(\bU),$ where $\bU$ is the orthogonal projection onto the subspace common to the rowspaces of all $2q$-row deleted submatrices of $\bA.$ This subspace is referred to as the robust subspace, and it determines whether for a given $(\bA,q)$ we could have exact, partial, or trivial robust recovery. We further study the computability of the robust subspace and prove that computing it is NP-Hard for arbitrary matrices, motivating the study of structured measurement models. We also show that i.i.d. Gaussian matrices admit an $O(1)$ computation of the robust subspace due to a phase transition phenomenon, while matrices formed by oversampling from orthonormal systems admit an $O(m)$ computation scheme. Along the way, we developed a principled framework to reason concretely about the robustness inherent in the measurement model based on the notions of robust functions and the ambiguity set. These concepts suggest several directions for future work, including  nonlinear models, $\ell_1$-recoverability, approximate recoverability in the presence of noise, matrix completion, and other settings. 

\section*{Declaration of prior dissemination}
A preliminary version of part of this work was accepted for presentation at the European Signal Processing Conference (EUSIPCO) 2026, with a preprint available as arXiv:2510.24215v3. The present preprint corresponds to arXiv:2510.24215v5. The conference version included the robust subspace characterization, \(\ell_0\)-decoding recovery, an algorithm for the robust orthogonal projection, and stylized applications. The present manuscript substantially extends it with the NP-hardness result (\Cref{sec:hardness}), Gaussian random matrix phase transition (\Cref{sec:gauusian-random-matrices}), and connections to robust subspace recovery and compressed sensing (\Cref{sec:connections}). The introduction, motivation, and overall presentation have also been significantly revised and expanded.

\section*{Funding}
This work was supported by the Indo-French Centre for the Promotion
of Advanced Research (CEFIPRA) [Project 7102-1].

\section*{Declaration of competing interest}
The authors declare no known competing financial interests or personal relationships that could have influenced the work reported in this paper.

\section*{CRediT authorship contribution statement}
Vishal Halder is the lead author and contributed: formal analysis, investigation, methodology, validation, visualization, and writing -- original draft.
Alexandre Reiffers-Masson, Abdeldjalil A\"issa-El-Bey, and Gugan Thoppe were involved with: conceptualization, supervision, and writing -- review and editing.

\section*{Data availability}
No external datasets were used in this study. 
The numerical illustrations use synthetically generated data following the procedures described in the manuscript.

\section*{Declaration of generative AI use}
The authors used ChatGPT (OpenAI) for figure creation, subsequently reviewed and edited them, and take full responsibility for this manuscript.

\section*{Acknowledgements}
The authors would like to thank the Indian Institute of Science for hosting
some of them for brief visits.

\bibliographystyle{plain}
\bibliography{refs}

\end{document}